\documentclass[journal,english]{IEEEtran}
\usepackage[utf8]{inputenc}
\usepackage[T1]{fontenc}
\usepackage{hyperref}
\usepackage{graphicx} 
\usepackage[style=ieee, doi=false, isbn=false, url=false]{biblatex}
\usepackage{commands}
\usepackage{dsfont}
\usepackage{orcidlink}
\usepackage{enumitem}
\usepackage{babel,csquotes}

\usepackage{tikz}
\usetikzlibrary{positioning, calc}

\title{Local Geometry of Least Squares for Unmixing Signals with Parameter-Dependent Dictionaries}

\author{Santos~Michelena\,\orcidlink{0009-0003-5082-5932},
        Maxime~Ferreira~Da~Costa\,\orcidlink{0000-0003-0073-8825},
        and~Jos\'e~Picheral\,\orcidlink{0000-0001-9571-8151}
\thanks{S. Michelena, M. Ferreira Da Costa, and J. Picheral are with the Laboratory of Signals and Systems, CentraleSup\'elec, Universit\'e Paris--Saclay, CNRS, 91190 Gif-sur-Yvette, France (e-mails: \{santos.michelena, maxime.ferreira, jose.picheral\}@centralesupelec.fr).}
\thanks{S. Michelena is also with iUMTEK, 91400 Orsay, France.}
\thanks{M. Ferreira Da Costa acknowledges support from ANR grant ANR-24-CE48-3094.}
}

\begin{document}

\maketitle

\begin{abstract}
Modeling signals as linear combinations of atoms from a dictionary is ubiquitous in modern signal processing. In the finite-dimensional setting, whenever atoms depend nonlinearly upon unknown parameters, the signal model is said to be \emph{separable}. In this work, we study least-squares reconstruction of separable signals and establish a unified theoretical framework for their analysis. We introduce the \emph{unmixing metric}, a distance that captures the distinct roles and sensitivities of linear and nonlinear parameters, and establish local convergence and stability guarantees under its topology. We then analyze variable projection from a geometric perspective, showing that it corresponds to restricting the optimization to the manifold of optimal linear parameters. This viewpoint provides a principled explanation for the improved algorithmic behavior of variable projection observed in practice, and produces sharp theoretical guarantees. The generic theory for separable problems is specialized to the case of point spread function (PSF) unmixing. We introduce a parametric notion of coherence and show that support separation directly controls both the size of the convergence region and the stability of recovery. Numerical experiments corroborate the theoretical predictions and demonstrate the practical relevance of the proposed framework.
\end{abstract}
\begin{IEEEkeywords}
Inverse problems, separable least squares, variable projection, non-convex optimization, manifold optimization, point spread function, coherence, local convergence.
\end{IEEEkeywords}

\section{Introduction}

\IEEEPARstart{S}{ignal} models composed of linear combinations of nonlinear parameter-dependent atoms are said to be \emph{separable}. In separable models, nonlinear parameters $\bx$ determine a structured dictionary, or forward map $\A(\bx)$, and linear coefficients $\by$ determine mixture weights. Such signal models are ubiquitous in a myriad of fields of engineering, applied science, and physics, such as blind image deblurring~\cite{salzer2026varpro}, time-resolved spectroscopy~\cite{Mullen2007}, medical signal analysis~\cite{kovacs-var-pro} and dynamical systems~\cite{wang2024varproPDE}.

In this article, we consider recovery of nonlinear parameters and mixture weights via the classical least-squares formulation~\cite[Chapter 10]{Nocedal2006}. Given the nonlinear dependence of the dictionary on $\bx$, this estimation task falls into the framework of nonconvex optimization. In the nonconvex setting, spurious minimizers may arise; consequently, optimization algorithms are sensitive to initialization, and recovery guarantees are typically local. A substantial body of work, therefore, focuses on establishing convergence radii for specific optimization schemes in the least-squares setting. Classically, a theorem of Kantorovich \cite{KantorovichAkilov1982} establishes the convergence radius of Newton’s method for solving nonlinear equations. Extensions of this framework yield convergence guarantees for the Gauss–Newton method \cite{gauss-newton-kantorovich}, while more recent developments derive optimal convergence radii under generalized Lipschitz assumptions on the Jacobian of the residual \cite{ferreira_local_2010, ferreira_kantorovichs_2009,SILVA2016178}. Albeit optimal, the convergence radii established by these results are relevant only to the chosen optimization method.

In lieu of tying guarantees to a specific algorithm, convergence can be established in terms of strong convexity of the loss around the ground truth. Specifically, positivity and uniform boundedness of the Hessian together define the \emph{strong basin of attraction}, a region where the loss exhibits optimal curvature and smoothness. This geometric perspective offers insight into how the forward map influences problem conditioning, thus exposing the mechanisms driving recovery performance and yielding algorithm-free convergence radii. Moreover, once local convexity is established, stability of recovery follows directly.

Exploiting the separable structure, the method of variable projection eliminates the linear variables in closed form, reducing the problem to the nonlinear parameter space~\cite{variableProjection}. A substantial body of work reports improved numerical performance relative to joint and alternating optimization, including enhanced stability and faster local convergence in many practical settings \cite{varpro-meta-review, projected-hessian, oleary-varpro, kaufman1975variable}. These improvements are typically attributed to better conditioning of the reduced problem through the pseudo-inverse. Despite these algorithmic advantages, explicit study of the local geometry of projected objectives remains relatively scarce in the literature.

Separable models are closely related to several classical inverse problems in signal processing, including blind deconvolution~\cite{Chretien2020, moulines_maximum_1997, li2016identifiability, li2019multichannel}, bilinear inverse problems~\cite{Beinert_2019}, and nonlinear compressive sensing \cite{nonlin-compress}. Indeed, in these models, observations typically depend nonlinearly on a low-dimensional parameter vector and linearly on mixing coefficients, often leading to nonconvex optimization landscapes with structured geometry. Recent approaches achieve stable and scalable recovery under strong structural assumptions~\cite{traonmilin2020basins, traonmilin2024strong, costa_local_2023}. Related problems also admit elegant functional formulations~\cite{chamon}; however, the fact that each dictionary atom depends on a single underlying function renders such models less general than the generic separable framework considered herein.

We take particular interest in \emph{PSF unmixing}, where each dictionary atom corresponds to a kernel or point spread function (PSF) centered at a known location but parametrized by an unknown shape parameter. This setting encompasses a wide range of applications, such as super-resolution imaging~\cite{huang2008three}, sorting of neural recordings~\cite{knudson2014inferring}, and calibration-free laser-induced breakdown spectroscopy (CF-LIBS)~\cite{poggialini_catching_2023}, where deconvolution of spectral lines that have been broadened by diverse, challenging to model physical effects is a crucial step in the analysis.

In this context, we introduced in our previous work~\cite{michelena_conv_2025} a parametrized notion of coherence to bound the size of the strong basin of attraction in terms of the separation between PSF centers. In~\cite{michelena-icassp}, we applied this analysis to the variable projection case. However, in this work, the analysis hinged on expanding the Lipschitz constants of the coherence, thereby obfuscating the applicability and interpretability of the analysis. 
The article contributions are as follows:
\begin{itemize}[wide]
    \item We establish local convergence and stability guarantees for least-squares reconstruction of separable models, expressed in terms of the local geometry of the objective. Optimization is performed jointly over nonlinear and linear parameters. To capture their distinct roles and sensitivities, we introduce the \emph{unmixing metric} on the concatenated parameter $\btheta \coloneq (\bx, \by)$, and derive convergence radii and stability bounds with respect to this metric. The basin radius is characterized by spectral constants associated with the dictionary and its Fréchet derivatives.
    \item Furthermore, we analyze the method of variable projection from a geometric perspective. By means of the pseudo-inverse, the method implicitly restricts the linear parameter to an optimal value surface, making the improved conditioning intrinsic to the formulation. The associated basin radius follows from the observation that the variable projection Hessian coincides with the full Hessian restricted to the tangent space of this surface. This restriction induces a conditioning amplification factor, yielding a convergence radius equal to the ambient radius scaled by this factor and adjusted by the metric change due to reduced degrees of freedom.
    \item Finally, specializing to PSF unmixing, we show that the spectral constants governing the basin size reduce to a parametric notion of coherence, itself controlled by support separation. Consequently, the local convergence and stability of the high-dimensional nonlinear problem are determined by a single physically meaningful parameter.
\end{itemize}

The rest of the paper is organized as follows. Section~\ref{sec:local-geometry} introduces the separable model and the notion of a strong basin of attraction, and derives convergence and stability guarantees for least-squares and variable projection methods. Section~\ref{sec:psf-unmixing} specializes the analysis to PSF unmixing, expressing guarantees in terms of support separation via parametric coherence. Section~\ref{sec:numerics} presents numerical experiments validating the theory, and Section~\ref{sec:conclusion} concludes the paper.

Throughout, bold lowercase letters denote vectors and bold uppercase letters denote matrices. The norm $\| \cdot \|$ denotes the Euclidean norm and its induced operator norm. The symbol $\Dd$ denotes the Fréchet derivative~\cite[Definition 17.1]{bauschke2017convex}, while $\nabla$ and $\nabla^2$ denote the gradient and Hessian, respectively. Unless specified, derivatives are taken in canonical coordinates, and $\mathrm{Lip}(f)$ denotes the Lipschitz constant of $f$.

\section{Local Geometry of Separable Problems}\label{sec:local-geometry}

\subsection{Problem Formulation}

We consider an observation vector $\bm{z}\in \mathbb{R}^N$ of the form
\begin{equation}\label{eq:observation-model}
\bz = \A(\bx^\star)\by^\star + \bw,
\end{equation}
where $\bx^\star \in \Omega \subset \R^p$ and $\by^\star \in \mathbb{R}^d$ denote the unknown ground-truth nonlinear parameters and mixing coefficients, respectively, and $\bw \in \mathbb{R}^N$ represents additive noise. The feasible set $\Omega$ is convex and compact, and the manifold map $\bx \mapsto \A(\bx)$ is full-rank and three times Fréchet differentiable everywhere in $\Omega$. The induced operator norms of $\A(\cdot)$ and its derivatives are uniformly bounded over $\Omega$.

We study reconstruction of signals of the form~\eqref{eq:observation-model} by way of the \emph{least squares estimator}, defined as
\begin{equation}\label{eq:lstq}
    \hat{\btheta} \in \argmin_{\bx \in \Omega, \by \in \R^{d}} \L(\btheta) = \tfrac{1}{2} \|\bz - \A(\bx)\by\|^2, \quad \btheta \coloneq (\bx, \by).
\end{equation}
We define the residual $\br : \Omega \times \R^{p + d} \to \R^N$ with
\begin{equation}\label{eq:residual}
    \br(\btheta) = \bz - \A(\bx)\by.
\end{equation}

The matrix $\A(\bx)$ depends nonlinearly on $\bx$, meaning the optimization problem~\eqref{eq:lstq} is generally non-convex. We therefore study the \emph{local geometry} of the loss in a neighborhood of the ground truth parameter $\btheta^\star$, defined by Eq.~\eqref{eq:observation-model}. The ground truth satisfies 
\begin{equation*} 
    \br(\btheta^\star)=\bw.
\end{equation*}

The behavior of any gradient-based optimization method near a solution is entirely governed by the local curvature of the loss, \emph{i.e.}, by its Hessian. Consequently, a purely geometric characterization of the loss around $\btheta^\star$ yields convergence guarantees that are independent of the specific algorithm chosen by a practitioner (\emph{e.g.}, gradient descent, Gauss--Newton, or quasi-Newton methods).
This favorable local geometry is formalized through the notion of a strong basin of attraction.

\begin{definition}[Strong basin of attraction]
    Suppose the loss $\L:~\Omega \times \R^{d} \to \R_+$ of the non-convex optimization problem~\eqref{eq:lstq} is twice continuously differentiable with respect to $\btheta$, and that $\btheta^\star$ is the ground truth.
    A neighborhood $\mathcal{N}$ of $\btheta^\star$ is called a \emph{strong basin of attraction} if there exist constants $0 < \alpha \leq \beta < \infty$ such that $\L$ is $\alpha$-strongly convex and $\beta$-smooth on $\mathcal{N}$. That is, for all $\btheta \in \mathcal{N}$ and all directions $\bu \in \R^{p + d}$,
    \begin{equation}\label{eq:alpha-beta-geometry}
        \alpha \norm{\bu}^2 \leq \left\langle \nabla^2\L(\btheta)\bu, \bu \right\rangle \leq \beta \norm{\bu}^2.
    \end{equation}
\end{definition}

Within a strong basin of attraction, the Hessian is uniformly strongly coercive and uniformly bounded above. As a consequence, the loss behaves locally like a well-conditioned quadratic objective.
This ensures stable local identifiability and linear convergence of standard gradient-based methods from any initialization inside the basin~\cite[Theorem 3.3]{Nocedal2006}.

The radius of the largest ball centered at $\btheta^\star$ that is fully contained in such a region is called the \emph{radius of $\alpha$-convexity}. If $\alpha=0$, it is called the \emph{radius of strong convexity}. It quantifies how far one may perturb the true parameters while preserving this favorable local geometry. This concept is visualized in figure~\ref{fig:basin-visualization}, in terms of the distance $\rho$ introduced in the next section. The top panel shows a multi-modal optimization landscape typical in the non-convex setting. The bottom panel shows the regions of $\alpha$-convexity and convexity as a function of $\rho$ characterized by $\lambda_{\min}(\H(\btheta))$.

\subsection{Strong Basin of Attraction of Least Squares for Separable Problems}

In this section, we harness the well-behavedness of the manifold map $\A(\cdot)$ over $\Omega$ to produce a lower bound on the radius of strong convexity for the least squares estimator. We express the radius in terms of global spectral properties of $ \A(\cdot)$ and its Fréchet derivatives. Furthermore, we establish stability under noise for the estimate $\hat{\btheta}$ that can be recovered by any standard gradient-based method.

The lower bound on the radius is obtained by quantifying Hessian spectrum distortion around the ground truth in terms of parameter perturbations. Under the assumed Lipschitz regularity of the forward map, spectrum distortions are characterized by the Lipschitz constants of $\A(\cdot)$ and its derivatives, herein defined via the global spectral constants
\begin{equation}\label{eq:spectral-norms}
    \sigma_k \coloneq \sup_{\bx \in \Omega} \| \Dd^k \A(\bx) \|,
\end{equation}
where $\Dd^k \A(\bx)$ denotes the $k$-th order Fréchet derivative of $\bx \mapsto \A(\bx)$ evaluated at $\bx$, for $k \in \{0,1,2,3\}$.

The different nature, or physical meaning, of nonlinear parameters $\bx$ and linear parameters $\by$ implies the Euclidean distance over $\Omega \times \R^{d}$ is of ambiguous meaning. For this reason, we define the \emph{unmixing metric} 
\begin{equation}\label{eq:rho}
    \rho(\btheta, \btheta^\star) \coloneq (\sigma_2 \norm{\by^\star}  + \sigma_1)\| \bx - \bx^\star \| + \sigma_1 \| \by - \by^\star \|,
\end{equation}
and observe that the Jacobian of the residual~\eqref{eq:residual} satisfies
\begin{equation*} 
    \|\bm{J}(\btheta)-\bm{J}(\btheta^\star)\|
         \leq \rho(\btheta, \btheta^\star).
\end{equation*}
A detailed derivation of the above inequality can be found in Appendix~\ref{app:jacobian-perturbation}.
    
By expressing distances in the parameters as a function of the induced Jacobian perturbation, the unmixing metric naturally accounts for motion over the parameter space $\Omega \times \R^{d}$ in terms of changes in curvature of the loss, rendering it a natural metric to use when studying separable problems.

For notational convenience, define $\H(\btheta) \coloneq \nabla^2 \L(\btheta)$. Distortions of the spectrum of $\H(\btheta)$ around the ground truth $\btheta^\star$ are controlled using Weyl's inequality~\cite[Corollary 8.1.6]{golub2013MatrixComputations}, which, for eigenvalue index $\ell \in \{1, \dots, p + d\}$, reads
\begin{equation}\label{eq:weyl}
    |\lambda_\ell(\H(\btheta)) - \lambda_\ell(\H(\btheta^\star))| \leq \| \H(\btheta) - \H(\btheta^\star) \|.
\end{equation}
 The \emph{Weyl envelope} is the right-hand side of this inequality. The maximum radius $\rho$ such that the Weyl envelope is bounded above by $\lambda_{\min} (\H(\btheta^\star)) - \alpha$ is a lower bound on the radius of $\alpha$-convexity, by Inequality~\eqref{eq:alpha-beta-geometry}. 

 \begin{figure}[t]
     \centering
     \includegraphics[width=\linewidth]{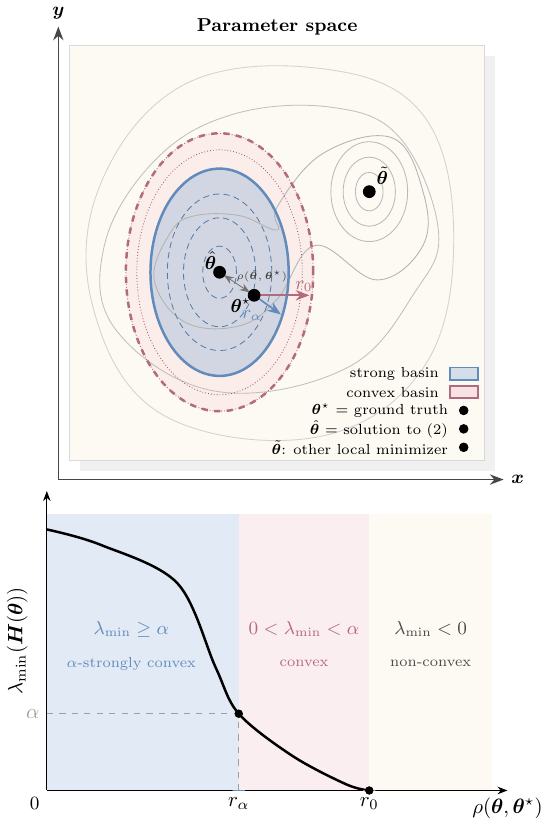}
     \caption{Visualization of the optimization geometry of a separable problem. Top: optimization landscape, $\alpha$-strongly convex region in blue, convex region in pink. Bottom: landscape in terms of the spectrum of $\H(\btheta)$.}
     \label{fig:basin-visualization}
 \end{figure}
 
For any two $\btheta, \btheta^\star \in \Omega$, the Weyl envelope expands into a second-degree polynomial on $\rho(\btheta, \btheta^\star)$, which in turn enables the derivation of the desired bound on the radius of strong convexity. This result is summarized in the following theorem:

\begin{theorem}[Basin of attraction of LS for separable problems]\label{thm:least-squares-basin}
    For any $\btheta \in \Omega$ the Weyl envelope of the Hessian satisfies
    \begin{equation*}
        \| \H(\btheta) - \H(\btheta^\star) \| \leq c_1(\btheta^\star)\rho(\btheta,\btheta^\star) + c_2(\btheta^\star)\rho(\btheta,\btheta^\star)^2.
    \end{equation*}
    with 
    \begin{align*}
        c_1(\btheta^\star) &\coloneq 2\sigma_1\|\by^\star\| + 2\sigma_0 + c_{r,0}(\btheta^\star)c_{r,1}(\btheta^\star) + \|\bw\|c_{r,2}(\btheta^\star),  \\
        c_2(\btheta^\star) &\coloneq 1 + c_{r,1}(\btheta^\star)c_{r,2}(\btheta^\star),
    \end{align*}
    and constants $c_{r,0}, c_{r,1}$ and $c_{r,2}$ defined as
    \begin{subequations}\label{eq:equiv_constants}
        \begin{align}
            c_{r,0}(\btheta^\star) &\coloneq \sigma_2\|\by^\star\| + 2\sigma_1, \\
            c_{r,1}(\btheta^\star) &\coloneq \max \left\{ \frac{\sigma_0}{\sigma_2 \| \by^\star \|+ \sigma_1 } , \frac{1}{\| \by^\star \|} \right\}, \\
            c_{r,2}(\btheta^\star) &\coloneq \max \left\{ \frac{\sigma_3\| \by^\star \| + 2\sigma_2}{\sigma_2 \| \by^\star\| + \sigma_1 } , \frac{2\sigma_2}{\| \by^\star \|} \right\}. 
        \end{align}
    \end{subequations}

    As a consequence, we obtain a lower bound on the radius of  $\alpha$-convexity
    \begin{equation*} 
        r_{\alpha, \mathrm{ls}} \coloneq \dfrac{\sqrt{c_1(\btheta^\star)^2 + 4c_2(\btheta^\star)(\lambda_{\min}(\H(\btheta^\star)) - \alpha)} - c_1(\btheta^\star)}{2c_2(\btheta^\star)}.
    \end{equation*}
\end{theorem}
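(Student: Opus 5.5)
The plan is to use the Gauss--Newton splitting of the least-squares Hessian,
\begin{equation*}
    \H(\btheta) = \bm{J}(\btheta)^\top \bm{J}(\btheta) - \sum_{n=1}^N r_n(\btheta)\nabla^2 \br_n(\btheta),
\end{equation*}
which I abbreviate as $\H(\btheta) = \bm{J}^\top\bm{J} - \mathcal{R}(\btheta)$, where $\mathcal{R}(\btheta)$ is the residual--curvature term. (The sign convention follows from $\br = \bz - \A(\bx)\by$; only norms matter below.) I would bound the two pieces of $\H(\btheta) - \H(\btheta^\star)$ separately.

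\textbf{Gauss--Newton part.} Write
\begin{equation*}
    \bm{J}^\top\bm{J} - \bm{J}^{\star\top}\bm{J}^\star = (\bm{J}-\bm{J}^\star)^\top(\bm{J}-\bm{J}^\star) + \bm{J}^{\star\top}(\bm{J}-\bm{J}^\star) + (\bm{J}-\bm{J}^\star)^\top\bm{J}^\star .
\end{equation*}
The Jacobian perturbation bound of Appendix~\ref{app:jacobian-perturbation} gives $\|\bm{J}-\bm{J}^\star\|\le\rho$. Since $\bm{J}^\star = -[\Dd\A(\bx^\star)\by^\star,\ \A(\bx^\star)]$, we have $\|\bm{J}^\star\|\le \sigma_1\|\by^\star\|+\sigma_0$. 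Together these give the bound
\begin{equation*}
    \rho^2 + 2(\sigma_1\|\by^\star\|+\sigma_0)\rho ,
\end{equation*}
which accounts for the ``$1$'' in $c_2$ and the ``$2\sigma_1\|\by^\star\|+2\sigma_0$'' in $c_1$.

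\textbf{Curvature part.} Set $\mathcal{S}(\btheta)\coloneq \nabla^2\br(\btheta)$, viewed as a bilinear map. It has an $\bx\bx$-block $\Dd^2\A(\bx)[\cdot,\cdot]\by$, an $\bx\by$-block $\Dd\A(\bx)[\cdot]$, and a vanishing $\by\by$-block. I would split
\begin{equation*}
    \mathcal{R}(\btheta)-\mathcal{R}(\btheta^\star) = \langle \br(\btheta)-\br(\btheta^\star), \mathcal{S}(\btheta)\rangle + \langle \bw, \mathcal{S}(\btheta)-\mathcal{S}(\btheta^\star)\rangle ,
\end{equation*}
and then bound three quantities in terms of $\rho$.
\begin{enumerate}[label=(\roman*)]
    \item \emph{Residual change.} By the mean value theorem along the segment (using convexity of $\Omega$), $\|\br(\btheta)-\br(\btheta^\star)\|\le \sup\|\bm{J}\|\cdot(\|\bx-\bx^\star\|+\|\by-\by^\star\|)$. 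Comparing $\sigma_0\|\by-\by^\star\|$ and $\sigma_1\|\by^\star\|\,\|\bx-\bx^\star\|$ coefficientwise with $\rho$ gives a factor $c_{r,1}$, with $\|\by\|\le\|\by^\star\|+\|\by-\by^\star\|$ contributing the extra terms.
    \item \emph{Size of $\mathcal{S}$.} $\|\mathcal{S}(\btheta)\|\lesssim \sigma_2\|\by\|+2\sigma_1 \le c_{r,0} + (\text{term linear in }\rho)$.
    \item \emph{Change of $\mathcal{S}$.} $\|\mathcal{S}(\btheta)-\mathcal{S}(\btheta^\star)\|\le(\sigma_3\|\by^\star\|+2\sigma_2)\|\bx-\bx^\star\| + 2\sigma_2\|\by-\by^\star\|$ plus higher order. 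Converting this to $\rho$ by taking the maximum of coefficient ratios yields exactly the form of $c_{r,2}$.
\end{enumerate}
The generic device in all three is that if $a\|\bx-\bx^\star\|+b\|\by-\by^\star\|$ is to be bounded by $\rho$, then the factor $\max\{a/(\sigma_2\|\by^\star\|+\sigma_1),\, b/\sigma_1\}$ suffices. Collecting terms gives the $c_{r,0}c_{r,1}\rho$ and $\|\bw\|c_{r,2}\rho$ contributions to $c_1$, and the $c_{r,1}c_{r,2}\rho^2$ contribution to $c_2$.

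\textbf{Main obstacle.} The hardest step is the bookkeeping for (i)--(iii). The stated constants must come out exactly, and the cross terms coming from $\|\by\|$ versus $\|\by^\star\|$ have to be absorbed into the quadratic $c_{r,1}c_{r,2}\rho^2$ term rather than producing extra ones. If the product structure of $c_1$ and $c_2$ does not arise directly, I would bound $\|\br(\btheta)-\br(\btheta^\star)\|\le c_{r,1}\rho$ and $\|\mathcal{S}(\btheta)\|\le c_{r,0}+c_{r,2}\rho$. That choice gives precisely $c_{r,0}c_{r,1}\rho + c_{r,1}c_{r,2}\rho^2$. The normalisations of $c_{r,1}$ and $c_{r,2}$, namely $\sigma_0/(\sigma_2\|\by^\star\|+\sigma_1)$ and $1/\|\by^\star\|$, suggest this is the intended route. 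I would check this by writing $\A(\bx)\by-\A(\bx^\star)\by^\star$ through a telescoping decomposition.

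\textbf{Radius.} By Weyl's inequality~\eqref{eq:weyl}, $\lambda_{\min}(\H(\btheta))\ge\alpha$ whenever $c_1\rho+c_2\rho^2\le\lambda_{\min}(\H(\btheta^\star))-\alpha$. Solving this quadratic inequality for its positive root gives $r_{\alpha,\mathrm{ls}}$.
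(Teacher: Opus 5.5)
Your route is the paper's: the same Gauss--Newton split, the same curvature bound $2(\sigma_1\|\by^\star\|+\sigma_0)\rho+\rho^2$, a product-rule split of the residual term, norm comparisons, and Weyl.

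\begin{itemize}
\item \emph{Residual split.} You pair $\br(\btheta)-\br(\btheta^\star)$ with $\mathcal{S}(\btheta)$, and $\bw$ with $\mathcal{S}(\btheta)-\mathcal{S}(\btheta^\star)$. The paper pairs them with $\mathcal{S}(\btheta^\star)$ and $\br(\btheta)$ instead. Once you write $\|\mathcal{S}(\btheta)\|\le c_{r,0}+\|\mathcal{S}(\btheta)-\mathcal{S}(\btheta^\star)\|$, both orderings give the same expression $c_{r,0}\rho_1+(\|\bw\|+\rho_1)\rho_2$. Here $\rho_1$ and $\rho_2$ denote the bounds on $\|\br(\btheta)-\br(\btheta^\star)\|$ and $\|\mathcal{S}(\btheta)-\mathcal{S}(\btheta^\star)\|$.
\item \emph{Repair to (i).} Skip the mean-value estimate and use $\br(\btheta)-\br(\btheta^\star)=(\A(\bx^\star)-\A(\bx))\by^\star+\A(\bx)(\by^\star-\by)$. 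This gives exactly $\rho_1=\sigma_1\|\by^\star\|\|\bx-\bx^\star\|+\sigma_0\|\by-\by^\star\|$, with no $\|\by\|$ terms.
\item \emph{Repair to (iii).} There is no higher-order remainder. The split of $\Dd^2\A(\bx)[\cdot,\cdot]\by-\Dd^2\A(\bx^\star)[\cdot,\cdot]\by^\star$ into $\Dd^2\A(\bx)[\cdot,\cdot](\by-\by^\star)+(\Dd^2\A(\bx)-\Dd^2\A(\bx^\star))[\cdot,\cdot]\by^\star$ is exact.
\end{itemize}

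The genuine gap is the final conversion to $\rho$, which you assert rather than compute. Your coefficient-ratio device is correct, but it does not produce the stated constants.
\begin{itemize}
\item Applied to $\rho_1$, it gives $\max\{\sigma_1\|\by^\star\|/(\sigma_2\|\by^\star\|+\sigma_1),\,\sigma_0/\sigma_1\}$.
\item Applied to your bound in (iii), it gives $\max\{(\sigma_3\|\by^\star\|+2\sigma_2)/(\sigma_2\|\by^\star\|+\sigma_1),\,2\sigma_2/\sigma_1\}$.
\end{itemize}
The entries $\sigma_0/(\sigma_2\|\by^\star\|+\sigma_1)$, $1/\|\by^\star\|$ and $2\sigma_2/\|\by^\star\|$ in $c_{r,1}$ and $c_{r,2}$ never arise.

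Nor can the stated constants be rescued. At $\bx=\bx^\star$ one has $\rho=\sigma_1\|\by-\by^\star\|$ and $\rho_1=\sigma_0\|\by-\by^\star\|$. So $\rho_1\le c_{r,1}\rho$ forces $c_{r,1}\ge\sigma_0/\sigma_1$. The stated $c_{r,1}$ violates this as soon as $\sigma_2\|\by^\star\|>0$ and $\sigma_0\|\by^\star\|>\sigma_1$.

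The paper's proof has the same hole; it only says ``it can be verified''. In fact the envelope inequality itself fails with the stated constants:
\begin{itemize}
\item Take $N=p=d=1$, $\A(x)=1+\tfrac12(x-x^\star)^2$ on $\Omega=[x^\star-\varepsilon,x^\star+\varepsilon]$, $y^\star=1$ and $\bw=0$.
\item Then $\sigma_0=1+\varepsilon^2/2$, $\sigma_1=\varepsilon$, $\sigma_2=1$, $\sigma_3=0$. Hence $c_{r,1}=1$, $c_{r,2}=2$, $c_1=3+4\varepsilon+\varepsilon^2$ and $c_2=3$.
\item At $\btheta=(x^\star,2)$ one finds $\H(\btheta)-\H(\btheta^\star)=\mathrm{diag}(2,0)$.
\item Meanwhile $\rho=\varepsilon$ and $c_1\rho+c_2\rho^2=3\varepsilon+7\varepsilon^2+\varepsilon^3$, which is below $2$ for $\varepsilon\le 0.3$.
\item Adding a second row $\varepsilon(x-x^\star)$ to $\A$ makes $\H(\btheta^\star)$ positive definite and leaves the conclusion unchanged.
\end{itemize}

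So your argument, like the paper's, proves the envelope bound and the radius $r_{\alpha,\mathrm{ls}}$ only with $c_{r,1}$ and $c_{r,2}$ replaced by the two maxima above. You should state the result in that corrected form rather than claim the stated constants come out exactly.
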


\begin{proof}
    The proof proceeds by bounding the perturbation of the Hessian
    $\H(\btheta)$ around $\btheta^\star$ in the unmixing metric, and by applying Weyl's inequality.
    
    We decompose the Hessian as the sum of a curvature term $\bm{H}_c(\btheta)$ and a residual term $\bm{H}_r(\btheta)$ so that
    \begin{align*}
        \H(\btheta)
        &= \bm{J}(\btheta)^\top \bm{J}(\btheta)
        + \sum_{\ell=1}^N r_\ell(\btheta)\nabla^2 r_\ell(\btheta) \\
        &\eqcolon \H_c(\btheta) + \H_r(\btheta).
    \end{align*}
    With the triangle inequality, one has $\| \H(\btheta) - \H(\btheta^\star) \| \leq \| \H_c(\btheta) - \H_c(\btheta^\star) \| + \| \H_r(\btheta) - \H_r(\btheta^\star) \|$, and we bound the perturbation separately on the curvature and residual terms.
    
    \noindent \emph{Curvature term.}
    Using standard matrix identities, one obtains
    \begin{multline}
        \|\H_c(\btheta) - \H_c(\btheta^\star)\| \\
        \le
        \big(2\|\bm{J}(\btheta^\star)\|
        + \|\bm{J}(\btheta)-\bm{J}(\btheta^\star)\|\big) 
        \cdot
        \|\bm{J}(\btheta)-\bm{J}(\btheta^\star)\|.
        \label{eq:curvature-perturbation}
    \end{multline}
    and observe
    \begin{equation*} 
        \|\bm{J}(\btheta^\star)\| = \|[(\Dd\A(\bx^\star))\by^\star, \A(\bx^\star)]\| \leq \sigma_1\|\by^\star\| + \sigma_0.
    \end{equation*}
    
    \noindent \emph{Residual term.}
    Define the auxiliary metrics
    \begin{align*}
        \rho_1(\btheta,\btheta^\star)
        &\coloneq \sigma_1\|\by^\star\|\|\bx-\bx^\star\|
        + \sigma_0\|\by-\by^\star\|, \\
        \rho_2(\btheta,\btheta^\star)
        &\coloneq (\sigma_3\|\by^\star\|+\sigma_2)\|\bx-\bx^\star\|
        + 2\sigma_2\|\by-\by^\star\|.
    \end{align*}
    Then the residual Hessian perturbation satisfies (Appendix~\ref{app:residual-perturbation})
    \begin{align}
        \|\H_r(\btheta)-\H_r(\btheta^\star)\|
        &\le c_{r,0}(\btheta^\star)\rho_1(\btheta,\btheta^\star) \nonumber\\
        & + \big(\|\bw\| + \rho_1(\btheta,\btheta^\star)\big)
        \rho_2(\btheta,\btheta^\star).
        \label{eq:residual-perturbation}
    \end{align}

    Since $\rho$, $\rho_1$, and $\rho_2$ define norms on $\Omega\times\R^{d}$, by norm equivalences, it can be verified that the constants in~\eqref{eq:equiv_constants} verify
        \begin{align*}
        \rho_1(\btheta,\btheta^\star)
        \leq c_{r,1}(\btheta^\star)\rho(\btheta,\btheta^\star),\
        \rho_2(\btheta,\btheta^\star)
        \leq c_{r,2}(\btheta^\star)\rho(\btheta,\btheta^\star).
    \end{align*}
    
    Combining~\eqref{eq:curvature-perturbation},~\eqref{eq:residual-perturbation} and~\eqref{eq:equiv_constants} yields the stated Hessian perturbation bound. The expression for $r_{\alpha, \mathrm{ls}}$ follows by applying Weyl's inequality and solving the resulting quadratic inequality.
\end{proof}

\begin{corollary}[Stability of least squares recovery]\label{cor:ls-stability}
    The local minimizer of the loss within the strong basin of attraction $\hat \btheta$ satisfies
    \begin{align*} 
        \rho(\hat{\btheta}, \btheta^\star) \leq (\sigma_2\|\by^\star\| + \sigma_1)\dfrac{\|\bm{J}(\btheta^\star)^\top \bw\|}{\alpha}.
    \end{align*}
\end{corollary}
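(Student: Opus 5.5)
The plan is the standard strong-convexity stability argument, carried out in Euclidean coordinates on $\btheta$ and then converted to the unmixing metric. Let $\mathcal{N}$ be the ball of radius $r_{\alpha,\mathrm{ls}}$ (in $\rho$) around $\btheta^\star$ given by Theorem~\ref{thm:least-squares-basin}. On $\mathcal{N}$ the loss $\L$ is $\alpha$-strongly convex, and $\mathcal{N}$ is convex because $\rho(\cdot,\btheta^\star)$ is a norm of $\btheta-\btheta^\star$. So the segment $\btheta_t = \btheta^\star + t(\hat\btheta - \btheta^\star)$, $t\in[0,1]$, stays in $\mathcal{N}$. Since $\hat\btheta$ is a local minimizer inside the basin, $\nabla\L(\hat\btheta)=0$; I will assume it is interior to $\Omega$, or otherwise use the variational inequality, which gives the same conclusion.

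\textbf{Step 1: strong monotonicity of the gradient.} Apply the fundamental theorem of calculus along $\btheta_t$:
\begin{equation*}
\langle \nabla\L(\hat\btheta)-\nabla\L(\btheta^\star), \hat\btheta-\btheta^\star\rangle = \int_0^1 \langle \H(\btheta_t)(\hat\btheta-\btheta^\star),\hat\btheta-\btheta^\star\rangle\,dt \ge \alpha\|\hat\btheta-\btheta^\star\|^2 .
\end{equation*}
With $\nabla\L(\hat\btheta)=0$ and Cauchy--Schwarz this gives $\|\hat\btheta-\btheta^\star\|\le \|\nabla\L(\btheta^\star)\|/\alpha$. The gradient at the ground truth is $\nabla\L(\btheta^\star) = -\bm{J}(\btheta^\star)^\top\br(\btheta^\star) = -\bm{J}(\btheta^\star)^\top\bw$, up to the sign convention for $\bm{J}$, which does not matter under the norm. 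Hence
\begin{equation*}
\|\hat\btheta-\btheta^\star\| \le \|\bm{J}(\btheta^\star)^\top\bw\|/\alpha .
\end{equation*}

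\textbf{Step 2: convert to $\rho$.} This is the only place where constants matter. Bound $\rho(\hat\btheta,\btheta^\star)$ by the Euclidean norm using the larger of the two weights in~\eqref{eq:rho}. Since $\sigma_2\|\by^\star\|+\sigma_1\ge\sigma_1$,
\begin{equation*}
\rho \le (\sigma_2\|\by^\star\|+\sigma_1)\big(\|\bx-\bx^\star\|+\|\by-\by^\star\|\big).
\end{equation*}
The sum $\|\bx-\bx^\star\|+\|\by-\by^\star\|$ is bounded by $\sqrt2\,\|\btheta-\btheta^\star\|$, not by $\|\btheta-\btheta^\star\|$. So this naive route produces an extra factor $\sqrt2$ relative to the stated constant.

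To obtain the stated constant exactly, I would instead run Step 1 in the $\rho$-geometry. Write $\rho(\btheta,\btheta^\star)=\langle \bm{w}, (\|\bx-\bx^\star\|,\|\by-\by^\star\|)\rangle$ with weight vector $\bm{w}=(\sigma_2\|\by^\star\|+\sigma_1,\ \sigma_1)$. Alternatively, interpret $\alpha$ as the strong-convexity constant measured in whichever product norm the paper intends, for example the max of the block norms. This bookkeeping of norm-equivalence constants is the expected main obstacle. If the $\sqrt2$ cannot be removed under the Euclidean definition~\eqref{eq:alpha-beta-geometry}, I would state the bound with it, or with the block-wise norm, and note the discrepancy.

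A final consistency check is needed: the minimizer must actually lie in the basin. Step 1 shows that the unique stationary point in $\mathcal{N}$ satisfies the bound. To guarantee that such a point exists, it suffices that $\rho$ of the bound be smaller than $r_{\alpha,\mathrm{ls}}$, that is, that the noise be small. By strong convexity, $\L$ then has a minimizer in the closed ball of that radius.
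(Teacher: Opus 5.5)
Your Step 1 is exactly the paper's proof. The paper uses strong monotonicity of $\nabla\L$ on the basin and $\nabla\L(\btheta^\star)=\pm\bm{J}(\btheta^\star)^\top\bw$ to get $\|\hat\btheta-\btheta^\star\|\le\|\bm{J}(\btheta^\star)^\top\bw\|/\alpha$. It then writes only ``yielding the result'', so it silently assumes $\rho(\hat\btheta,\btheta^\star)\le(\sigma_2\|\by^\star\|+\sigma_1)\|\hat\btheta-\btheta^\star\|$.

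You are right that this conversion is unjustified, and it cannot be repaired. Set $a=\sigma_2\|\by^\star\|+\sigma_1$ and $b=\sigma_1$. The best available estimate is $\rho(\btheta,\btheta^\star)\le\sqrt{a^2+b^2}\,\|\btheta-\btheta^\star\|$, and the stated inequality is false in general. Here is a counterexample:
\begin{itemize}
\item Take $p=d=1$, $N=2$, $\Omega=[-1,1]$, $\bm{A}(x)=(1,x)^\top$, $x^\star=0$, $y^\star=1$, and $\bw=(\epsilon,\epsilon)$ with $\epsilon=0.01$.
\item Then $\sigma_1=1$ and $\sigma_2=0$, so $\rho(\btheta,\btheta^\star)=|x|+|y-1|$. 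Also $\bm{J}(\btheta^\star)=-\begin{bmatrix}0&1\\1&0\end{bmatrix}$, so the corollary claims $\rho(\hat\btheta,\btheta^\star)\le\sqrt2\,\epsilon/\alpha$.
\item The Hessian is $\begin{bmatrix}y^2&2xy-\epsilon\\2xy-\epsilon&1+x^2\end{bmatrix}$. By Gershgorin it is $0.9$-coercive on the convex set $\{\rho(\btheta,\btheta^\star)\le0.02\}$.
\item The loss vanishes at $\hat\btheta=(\epsilon/(1+\epsilon),1+\epsilon)$. This point lies in that set, with $\rho(\hat\btheta,\btheta^\star)\approx0.0199$.
\item The claimed bound with $\alpha=0.9$ is only $\approx0.0157$.
\end{itemize}
Letting $\epsilon\to0$ and $\alpha\to1$, the violation ratio tends to $\sqrt{a^2+b^2}/a=\sqrt2$. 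So the loss you detected genuinely occurs.

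You should therefore drop your tentative routes to ``the stated constant exactly''. Running Step 1 in the $\rho$-geometry does not work, because $\rho$ is not induced by an inner product. Nor does reinterpreting $\alpha$: $\alpha$ is a Euclidean coercivity constant, which is what \eqref{eq:alpha-beta-geometry} and Weyl's inequality deliver. Measuring it in another norm changes the hypothesis rather than proving the claim. What your argument correctly establishes, using the sharp norm-equivalence constant instead of your factor $\sqrt2$, is
\[
\rho(\hat\btheta,\btheta^\star)\le\sqrt{(\sigma_2\|\by^\star\|+\sigma_1)^2+\sigma_1^2}\;\frac{\|\bm{J}(\btheta^\star)^\top\bw\|}{\alpha}.
\]
Two further points in your proposal are correct and cover things the paper's proof leaves implicit. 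The first is the variational-inequality treatment of a minimizer on the boundary of $\Omega$. The second is the small-noise condition guaranteeing a minimizer inside the basin: minimize over a slightly larger ball, then reuse the monotonicity inequality to see that the minimizer is interior.
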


\begin{proof}
    In the strong basin of attraction, the Hessian of $\L$ is $\alpha$-strongly coercive. Therefore, its gradient is $\alpha$-strongly monotone, so that
    \begin{align*}
        \alpha \|\hat{\btheta} - \btheta^\star \|^2 &\leq \langle \nabla\L(\hat{\btheta}) - \nabla\L(\btheta^\star), \hat{\btheta} - \btheta^\star\rangle \\
        &\leq \| \nabla\L(\btheta^\star) \| \|\hat{\btheta} - \btheta^\star \| \leq \| \bm{J}(\btheta^\star)^\top \bw \| \|\hat{\btheta} - \btheta^\star \|,
    \end{align*}
    yielding the result.
\end{proof}

\subsection*{Discussion}

The derivation in Theorem~\ref{thm:least-squares-basin} is carefully laid out to yield a bound depending explicitly on the finer regularity properties of the forward map, characterized by the spectral constants $\sigma_k$. For this reason, the result makes explicit which properties of $\A(\cdot)$ result in increased initialization robustness; large radii of strong convexity arise when the manifold map exhibits limited higher-order variation, as captured by the spectral constants $\sigma_2$ and $\sigma_3$. For such problems, the Hessian
varies slowly around the ground truth, and the least squares objective remains well-conditioned over a larger neighborhood. 

As a consequence, the sharpness of the bound is controlled by the uncertainty on the nonlinear parameters $\bx$, expressed through the size of the feasible set $\Omega$. Moreover, our derivation hinges on applying Weyl's inequality on the minimum eigenvalue, which is pessimistic for ill-conditioned matrices. However, the resulting bound applies to a broad class of separable problems. Tighter guarantees generally require exploiting additional problem-specific structure, such as special kernel properties, orthogonality relations, or sparsity patterns. The present result thus provides a general baseline guarantee, upon which sharper results can be built when additional structure is available.

Furthermore, the reconstruction error is governed by the smoothness and conditioning of the Jacobian. In the strong basin of attraction, the local strong convexity constant satisfies
\begin{equation*}
    \alpha \approx \lambda_{\min}(\bm{J}(\btheta^\star)^\top \bm{J}(\btheta^\star))
    = \sigma_{\min}(\bm{J}(\btheta^\star))^2.
\end{equation*}
Moreover, $\sigma_2\|\by^\star\| + \sigma_1$ provides an upper bound on the Lipschitz constant of $\bm{J}(\cdot)$ in the standard $\ell_2$ topology. Consequently, the recovery error satisfies
\begin{equation*}
    \rho(\hat{\btheta}, \btheta^\star) \lesssim 
    \frac{\mathrm{Lip}^{\ell_2}(\bm{J}(\cdot))
    \|\bm{J}(\btheta^\star)\|^2}{\sigma_{\min}(\bm{J}(\btheta^\star))^4} \| \bw \|,
\end{equation*}
so that the best-case reconstruction accuracy is achieved when the Jacobian varies slowly and remains well conditioned at the ground truth.

\subsection{Variable Projection}

The variable projection method~\cite{variableProjection} exploits the separable structure of the model by eliminating the linear variables from the optimization problem~\eqref{eq:lstq}.
In our setting, the matrix $\A(\bx)$ has full column rank for all $\bx \in \Omega$. Consequently, for any fixed $\bx$, the problem
\begin{equation*}
    \argmin_{\by \in \R^{d}} \tfrac{1}{2}\|\bz - \A(\bx)\by\|^2
\end{equation*}
admits a unique solution, given in closed form by
\begin{equation*}
    \hat{\by}(\bx) = \A(\bx)^+ \bz,
\end{equation*}
where 
\begin{equation*}
 \A(\bx)^+ = \left(\A(\bx)^\top \A(\bx)\right)^{-1}\A(\bx)^\top
\end{equation*}
denotes the Moore--Penrose pseudo-inverse. The optimal coefficient vector $\hat{\by}(\bx)$ implicitly defines a \emph{lifting map} that embeds the reduced problem back into the full parameter space
\(
    \btheta(\bx) \coloneq (\bx, \hat{\by}(\bx))
\).
Substituting the optimal coefficients into the loss yields the \emph{projected} objective
\begin{equation*}
    \L_\mathrm{vp}(\bx) = \L(\btheta(\bx)) = \tfrac{1}{2}\|(\Id - \A(\bx)\A(\bx)^+)\bz\|^2,
\end{equation*}
and, as a consequence, the \emph{projected least squares estimator}, given by
\begin{equation}\label{eq:projected-lstq}
    \hat{\bx} \in \argmin_{\bx \in \Omega} \L_\mathrm{vp}(\bx).
\end{equation}
This results in an optimization landscape expressed entirely in terms of $\bx$, and therefore a sizable reduction in degrees of freedom.

\begin{figure}[t]
    \centering
    \begin{tikzpicture}[node distance=1.0em]
    
    \node (top) {\includegraphics[width=\linewidth]{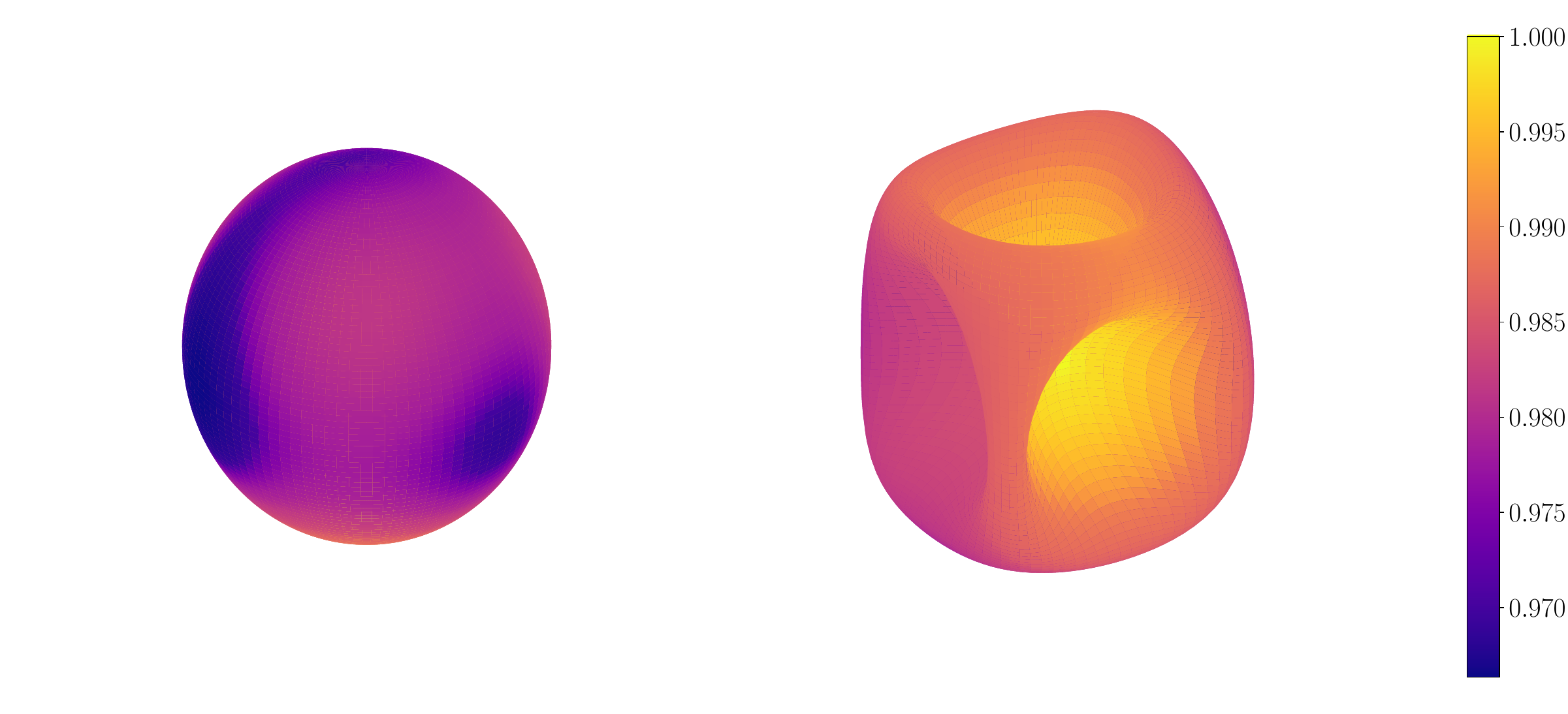}};
    
    \node[below=-0.5cm of top] (bottom) {\includegraphics[width=\linewidth]{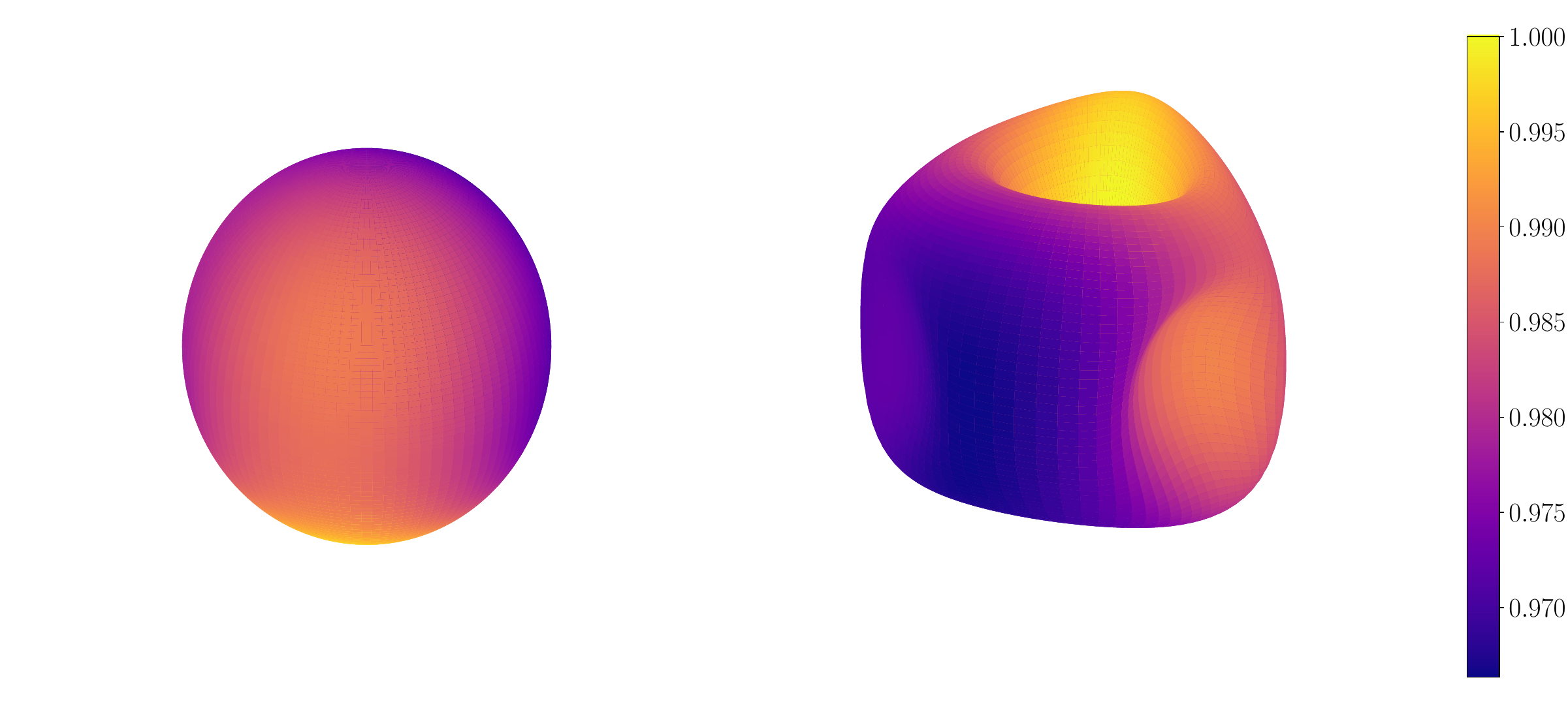}};
    
    \node[above=of top, xshift=-2.3cm, yshift=-0.9cm] {$\Omega$};
    \node[above=of top, xshift=1.5cm, yshift=-0.9cm] {$\hat{\by}(\Omega)$};
    
    \node[left=of top, rotate=90, yshift=-0.9cm, xshift=0.7cm] {Noiseless};
    \node[left=of bottom, rotate=90, yshift=-0.9cm, xshift=0.7cm] {Noisy};
    
    \end{tikzpicture}
    \caption{Geometric effect of variable projection on the loss landscape. The feasible set $\Omega$ for the nonlinear parameters (left) and its image under the optimal value map $\hat{\by}(\cdot)$ (right) are shown for noiseless (top) and noisy (bottom) observations.}
    \label{fig:var-pro-feasible-sets}
\end{figure}

The effect of variable projection is illustrated in Fig.~\ref{fig:var-pro-feasible-sets}. The figure shows the optimal value surface defined over $\R^{d}$ via the lifting map $\bm{\theta}(\bm{x})$, alongside the feasible set $\Omega$, herein defined as a sphere centered at the nonlinear ground truth parameters $\bx^\star$. The illustration was generated by considering a PSF unmixing problem (formally introduced in Sec.~\ref{sec:psf-unmixing}) with $p=3$ and $d=3$. The top row shows the noiseless setting, while the bottom row corresponds to observations perturbed by additive, Gaussian noise at $-10\,\mathrm{dB}$. Coloring indicates normalized loss values, with histogram matching applied between noiseless and noisy cases to enable qualitative comparison across panels.

The previous comparison highlights the key advantage of variable projection: restricting the optimization to optimal linear parameters eliminates directions associated with linear variables that do not reduce the objective, leading to a substantially better-conditioned loss landscape.

The improvement in conditioning is propagated to the Hessian of the projected objective, given by
\begin{equation}\label{eq:projected-hessian}
    \H_\mathrm{vp}(\bx) \coloneq \nabla^2 \L_\mathrm{vp}(\bx) = \Dd \btheta(\bx)^\top \H(\btheta(\bx)) \Dd \btheta(\bx).
\end{equation}
The curvature of the projected objective then corresponds to the full Hessian restricted to the tangent space of the optimal value surface.
The derivative of the lifting map $\Dd \btheta(\bx)$ takes the form of a truncated Schur complement,
\begin{equation*}
    \Dd \btheta(\bx) =
    \begin{bmatrix}
        \Id_p & - \nabla_{\bx \by} \L(\btheta(\bx))^\top (\nabla_{\by \by}^2 \L(\btheta(\bx)))^{-1}
    \end{bmatrix}^\top,
\end{equation*}
which removes curvature contributions corresponding to linear variables.

The curvature variations of the projected objective are directly controlled by those of the full problem evaluated along the optimal-value surface. Along this surface, the spectral spread of the Hessian is minimized, so Weyl's inequality provides tight control of the minimum eigenvalue. This is the key fact that enables transferring the radius $r_{\alpha, \mathrm{ls}}$ derived in Theorem~\ref{thm:least-squares-basin} to this setting.

\begin{lemma}[Projected Weyl envelope]\label{thm:projected-envelope}
    For all $\ell \in \{1, \dots, p \}$ the eigenvalues of the projected Hessian satisfy
    \begin{equation}\label{eq:projected-envelope}
        |\lambda_\ell(\H_\mathrm{vp}(\bx)) - \lambda_\ell(\H_\mathrm{vp}(\bx^\star))| \leq K \| \H(\btheta(\bx)) - \H(\btheta(\bx^\star)) \|,
    \end{equation}
    with 
    \begin{equation*}
        K \coloneq \ \left(1 + \dfrac{\| \nabla_{\bx \by}^2 \L(\btheta(\bx))\|}{\sigma_{\min}(\A(\bx))} \right)^2.
    \end{equation*}
\end{lemma}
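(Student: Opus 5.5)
The plan is to write the projected Hessian as a congruence of the full Hessian, $\H_\mathrm{vp}(\bx) = \bm{P}(\bx)^\top \H(\btheta(\bx)) \bm{P}(\bx)$ with $\bm{P}(\bx) \coloneq \Dd\btheta(\bx)$, using~\eqref{eq:projected-hessian}. We then combine Weyl's inequality~\eqref{eq:weyl}, applied in $\R^{p\times p}$, with a uniform bound on $\|\bm{P}\|$.

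First, by Weyl's inequality, for each $\ell \in \{1,\dots,p\}$,
\begin{equation*}
    |\lambda_\ell(\H_\mathrm{vp}(\bx)) - \lambda_\ell(\H_\mathrm{vp}(\bx^\star))| \leq \|\bm{P}(\bx)^\top \H(\btheta(\bx))\bm{P}(\bx) - \bm{P}(\bx^\star)^\top \H(\btheta(\bx^\star))\bm{P}(\bx^\star)\|,
\end{equation*}
so it suffices to bound the right-hand side. The cleanest route to the exact form $K\|\H(\btheta(\bx)) - \H(\btheta(\bx^\star))\|$ is to compare both Hessians through the \emph{same} lifting derivative. Write $\bm{P} = \bm{P}(\bx)$ and split
\begin{equation*}
    \bm{P}^\top\big(\H(\btheta(\bx)) - \H(\btheta(\bx^\star))\big)\bm{P} + \big(\bm{P}^\top \H(\btheta(\bx^\star))\bm{P} - \bm{P}(\bx^\star)^\top \H(\btheta(\bx^\star))\bm{P}(\bx^\star)\big).
\end{equation*}
The first term is bounded by $\|\bm{P}\|^2\|\H(\btheta(\bx))-\H(\btheta(\bx^\star))\|$. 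For the second term we use the Schur-complement structure. Since $\bm{P}(\bx^\star)$ minimizes the quadratic form over $\by$-directions, $\bm{P}(\bx^\star)^\top\H\bm{P}(\bx^\star)$ is exactly the Schur complement $\H_{\bx\bx} - \H_{\bx\by}\H_{\by\by}^{-1}\H_{\by\bx}$ at $\btheta(\bx^\star)$. Because $\bm{P}$ and $\bm{P}(\bx^\star)$ differ only in the $\by$-block, the difference is a quadratic form in $\bm{E} = \bm{P}-\bm{P}(\bx^\star)$ restricted to the $\by$-block: $\bm{E}^\top \H_{\by\by}\bm{E}$, with the cross terms vanishing by optimality. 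This remainder is controlled by the variation of $\H_{\by\by}^{-1}\H_{\by\bx}$, hence again by $\|\H(\btheta(\bx))-\H(\btheta(\bx^\star))\|$.

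If this second term cannot be absorbed cleanly, the fallback is to interpret the lemma as comparing congruences with a common $\bm{P}$, which is how it is used to transfer $r_{\alpha,\mathrm{ls}}$. In that reading only the first term appears, and the whole content is the bound $\|\bm{P}(\bx)\|^2 \le K$.

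Second, we bound $\|\bm{P}(\bx)\|$. From the block form of $\Dd\btheta$, and the triangle inequality on $[\Id_p;\ -\nabla^2_{\by\by}\L^{-1}\nabla_{\by\bx}\L]$,
\begin{equation*}
    \|\bm{P}(\bx)\| \le 1 + \|\nabla^2_{\by\by}\L(\btheta(\bx))^{-1}\|\,\|\nabla^2_{\bx\by}\L(\btheta(\bx))\|.
\end{equation*}
Now $\nabla^2_{\by\by}\L = \A(\bx)^\top\A(\bx)$, since $\L$ is quadratic in $\by$, so $\|(\nabla^2_{\by\by}\L)^{-1}\| = \sigma_{\min}(\A(\bx))^{-2}$. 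This gives $1 + \|\nabla^2_{\bx\by}\L\|/\sigma_{\min}(\A)^2$. To reach the stated denominator $\sigma_{\min}(\A(\bx))$, we refine the estimate using the structure $\nabla_{\by\bx}^2\L = \A^\top(\cdots) + (\cdots)$: the part of the cross term of the form $\A^\top\bm{M}$ gives $\|(\A^\top\A)^{-1}\A^\top\bm{M}\| = \|\A^+\bm{M}\| \le \|\bm{M}\|/\sigma_{\min}(\A)$. Squaring yields $K$.

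The main obstacle is this denominator step together with the mismatch between $\bm{P}(\bx)$ and $\bm{P}(\bx^\star)$ in the first step. I would first try to establish the pseudo-inverse factorization $(\nabla^2_{\by\by}\L)^{-1}\nabla^2_{\by\bx}\L = \A^+(\cdot)$. If it fails, I would accept $K$ with $\sigma_{\min}^2$, noting that the paper's normalization of $\nabla_{\bx\by}$ may absorb one factor of $\sigma_{\min}$.
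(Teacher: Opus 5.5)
Your route is the paper's: the paper also writes $\H_\mathrm{vp}$ as the congruence of $\H$ by $\Dd\btheta(\bx)$. It then expands blockwise with the triangle inequality, bounds each sub-block of $\Delta\H$ by $\|\Delta\H\|$, and arrives at $(1+\|\nabla^2_{\bx\by}\L\|\,\|\nabla^2_{\by\by}\L^{-1}\|)^2$, which is your bound on $\|\bm{P}(\bx)\|^2$. The two obstacles you flag are real, and the paper does not resolve them either. But your proposal does not close them, and the fixes you suggest would fail.

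\textbf{The mismatch term.} Your identity $\H_\mathrm{vp}(\bx)-\H_\mathrm{vp}(\bx^\star)=\bm{P}^\top\Delta\H\,\bm{P}+\bm{E}^\top\nabla^2_{\by\by}\L(\btheta(\bx^\star))\bm{E}$ is correct. However, the remainder is second order, depends on $\nabla^2_{\by\by}\L(\btheta(\bx^\star))^{-1}$, and cannot in general be absorbed into $K\|\Delta\H\|$. For $p=d=1$, take $\H(\btheta(\bx))=\mathrm{diag}(a,\gamma)$, and let $\H(\btheta(\bx^\star))$ agree with it except for off-diagonal entries $2\gamma$, with $a>4\gamma$. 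The Schur complements then differ by $4\gamma$, while $K=1$ and $\|\Delta\H\|=2\gamma$. The paper's three-term display is exactly a bound on $\|\bm{P}(\bx)^\top\Delta\H\,\bm{P}(\bx)\|$, i.e.\ your fallback reading; in this example its right-hand side is $0$.

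The missing idea is to use the \emph{sign} of the remainder rather than its size. $\H_\mathrm{vp}(\bx^\star)$ is the Loewner-order minimum of $\bm{Q}^\top\H(\btheta(\bx^\star))\bm{Q}$ over all $\bm{Q}$ whose $\bx$-block is $\Id_p$. Hence $\H_\mathrm{vp}(\bx)-\H_\mathrm{vp}(\bx^\star)\succeq\bm{P}(\bx)^\top\Delta\H\,\bm{P}(\bx)$, and so $\lambda_\ell(\H_\mathrm{vp}(\bx))\geq\lambda_\ell(\H_\mathrm{vp}(\bx^\star))-\|\bm{P}(\bx)\|^2\|\Delta\H\|$. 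This is the only direction needed for the radius of $\alpha$-convexity. Exchanging the roles of $\bx$ and $\bx^\star$ gives the upper bound with $\|\bm{P}(\bx^\star)\|^2$.

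\textbf{The denominator.} The paper obtains $\sigma_{\min}(\A(\bx))$ by asserting $\|\nabla^2_{\by\by}\L(\btheta(\bx))^{-1}\|=\sigma_{\min}(\A(\bx))^{-1}$. This is incorrect, since $\nabla^2_{\by\by}\L=\A^\top\A$; your $\sigma_{\min}(\A(\bx))^{-2}$ is the right value.

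Your proposed repair through $\A^+$ does not work. We have $\nabla^2_{\by\bx}\L(\btheta(\bx))[\bu]=\A(\bx)^\top(\Dd\A(\bx)[\bu])\hat\by(\bx)-(\Dd\A(\bx)[\bu])^\top\br(\btheta(\bx))$, where $\br(\btheta(\bx))=(\Id-\A(\bx)\A(\bx)^+)\bz$ is generally nonzero for $\bx\neq\bx^\star$ even when $\bw=0$. The second term has no left factor $\A(\bx)^\top$, so applying $(\A^\top\A)^{-1}$ to it only yields $\sigma_{\min}(\A(\bx))^{-2}$. Even the first part is controlled by $\|(\Dd\A(\bx)[\bu])\hat\by(\bx)\|/\sigma_{\min}(\A(\bx))$, which is not $\|\nabla^2_{\bx\by}\L\|/\sigma_{\min}(\A(\bx))$.

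What your argument actually proves is the one-sided estimate above, with $\|\bm{P}(\bx)\|^2=1+\|\nabla^2_{\by\by}\L^{-1}\nabla^2_{\by\bx}\L\|^2\leq(1+\|\nabla^2_{\bx\by}\L\|/\sigma_{\min}(\A(\bx))^2)^2$. Appealing to a normalization that might absorb one factor of $\sigma_{\min}$ is not an argument. So the lemma with the stated $K$ and the two-sided absolute value is not established, by your proposal or by the paper's proof.

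Under the hypothesis $\nabla^2_{\bx\by}\L(\btheta(\bx))=0$ of Theorem~\ref{thm:radii-comparison}, both issues disappear: $\bm{P}(\bx)=\bm{P}(\bx^\star)$ has zero $\by$-block, $K=1$, and the lemma reduces to sub-block contraction.
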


\begin{proof}
    As a notation shorthand, write 
    \begin{align*}
        \Delta\H_{\bx \bx} &\coloneq \nabla_{\bx \bx}^2 \L(\btheta(\bx)) - \nabla_{\bx \bx}^2 \L(\btheta(\bx^\star)), \\
        \Delta\H_{\by \bx} &\coloneq \nabla_{\by \bx}^2 \L(\btheta(\bx)) - \nabla_{\by \bx}^2 \L(\btheta(\bx^\star)), \\
        \Delta\H_{\by \by} &\coloneq \nabla_{\by \by}^2 \L(\btheta(\bx)) - \nabla_{\by \by}^2 \L(\btheta(\bx^\star)), 
    \end{align*}
    and
    \begin{equation*}
        \Delta \H \coloneq \H(\btheta(\bx)) - \H(\btheta(\bx^\star)).
    \end{equation*}
    Using expression~\eqref{eq:projected-hessian}, the triangle inequality and the sub-multiplicativity of $\|\cdot\|$,
    \begin{align}
        \| \H_\mathrm{vp}(\bx) - \H_\mathrm{vp}(\bx^\star) \| &\leq \| \Delta\H_{\bx \bx}\| \nonumber \\
        &\hspace{-80pt}+ 2 \| \nabla_{\bx \by}^2 \L(\btheta(\bx))\|\| \nabla_{\by \by}^2 \L(\btheta(\bx))^{-1}\|\| \Delta\H_{\by \bx}\| \nonumber \\
        &\hspace{-80pt}+ \| \nabla_{\bx \by}^2 \L(\btheta(\bx))\|^2\| \nabla_{\by \by}^2 \L(\btheta(\bx))^{-1}\|^2 \| \Delta\H_{\by \by}\|. \label{proof:schur-expansion}
    \end{align}
    The restriction to a sub-block is a contraction, therefore
    \(
        \| \Delta\H_{\bx \bx}\| \leq \| \Delta \H\| ,\ \| \Delta\H_{\by \bx} \| \leq \| \Delta \H\|, 
    \)
    and
    \(
        \| \Delta\H_{\by \by} \| \leq \| \Delta \H\|.
    \)
    Finally, because $\A(\cdot)$ is full-rank everywhere, $\| \nabla_{\by \by}^2 \L(\btheta(\bx))^{-1}\| = \sigma_{\min}(\A(\bx))^{-1}$.
    Substitution of these into~\eqref{proof:schur-expansion} yields
    \begin{multline*}
        \| \H_\mathrm{vp}(\bx) - \H_\mathrm{vp}(\bx^\star) \| \\
        \hspace{10pt}\leq \left(1 + 2\tfrac{\| \nabla_{\bx \by}^2 \L(\btheta(\bx))\|}{\sigma_{\min}(\A(\bx))} +\tfrac{\| \nabla_{\bx \by}^2 \L(\btheta(\bx))\|^2}{\sigma_{\min}(\A(\bx))^2}\right)\| \Delta \H\|,
    \end{multline*}
    thus concluding the proof.
\end{proof}

The coupling factor $K$ depends on the conditioning of the linear subproblem, and on the correlation between the column spaces spanned by $\A(\bx)$ and $\Dd \A(\bx)$, characterized by the minimum principal angle between the respective singular spaces. In neighborhoods of the ground truth, $K$ is uniformly bounded above, and for well-conditioned problems $K \approx 1$.

The lifting map $\bx \mapsto \btheta(\bx)$ propagates motion along $\Omega$ to variations along the optimal value surface $\hat \by(\Omega)$. The restriction $(\bx, \bx') \mapsto \rho(\btheta(\bx), \btheta(\bx'))$ posits a natural mechanism to measure these variations. Observe that the perturbation in the lifted linear coefficients satisfies
\begin{multline*}
    \|\hat{\by}(\bx) - \hat{\by}(\bx^\star) \| \leq \| \A(\bx)^+\by - \by^\star\| + \|\by^\star - \hat{\by}(\bx^\star)\| \nonumber \\
    \leq  \sigma_1 \| \bx - \bx^\star \| \| \A(\bx)^+\|\|\by^\star \| + \|\A(\bx)^+\bw \| + \|\bw\|. 
\end{multline*}
Where $\sigma_1$, taken as in definition~\eqref{eq:spectral-norms}, is the Lipschitz constant of $\A(\cdot)$. Furthermore, $\A(\bx)$ is full rank over $\Omega$. Therefore, there exists a constant
\begin{equation*}
    \widetilde{\sigma}_{\min} = \min_{\bm{x} \in \Omega} \sigma_{\min}\left( \A(\bm x) \right) > 0,
\end{equation*}
satisfying, for any $\bx$ in $\Omega$, the uniform bound
\begin{equation*}
    \|\A(\bx)^+\| = \sigma_{\min}(\A(\bx))^{-1} \le \widetilde{\sigma}_{\min}^{-1}.
\end{equation*}
As a consequence, 
\begin{equation}\label{eq:feedforward-error}
   \|\hat{\by}(\bx) - \hat{\by}(\bx^\star) \| \leq  \frac{\sigma_1}{\widetilde{\sigma}_{\min}} \| \bx - \bx^\star \|\|\by^\star \| + \left(1 + \frac{1}{\widetilde{\sigma}_{\min}}\right)\|\bw\|.
\end{equation}
Substitution of the above equation into the definition of the unmixing metric $\rho$~\eqref{eq:rho} gives
\begin{equation}\label{eq:rho-lift-bound}
\rho(\btheta(\bx), \btheta(\bx^\star))
\leq c_{\mathrm{vp}}\|\bx-\bx^\star\| + \sigma_1\Bigl(1+\frac{1}{\widetilde{\sigma}_{\min}}\Bigr)\|\bw\|,
\end{equation}
with
\begin{equation*}   c_{\mathrm{vp}}=\bigl(\sigma_2+\sigma_1^2/\widetilde{\sigma}_{\min}\bigr)\|\by^\star\|+\sigma_1.
\end{equation*}

The bound~\eqref{eq:rho-lift-bound} characterizes the propagation of perturbations of $\bx$, through the lifting map, to the full problem parameter $\btheta = (\bx, \by)$. In the noisy case, lifted distances are distorted by up to a factor of $\sigma_1(1 + \widetilde{\sigma}_{\min}^{-1})\|\bw\|$.

The eigenvalues of a Hermitian matrix interlace with the eigenvalues of its Schur complement~\cite{Smith1992InterlacingSchur}. This implies the existence of $k_\mathrm{vp} \geq 1$ such that
\begin{equation*}
    \lambda_{\min}(\H_\mathrm{vp}(\bx^\star)) = k_\mathrm{vp} \lambda_{\min}(\H(\btheta^\star)).
\end{equation*}

Combining the ambient strong-convexity radius $r_{\mathrm{LS}}$ from Theorem~\ref{thm:least-squares-basin} with the projected eigenvalue perturbation bound of Lemma~\ref{thm:projected-envelope} and the bound on the lifted linear parameter perturbation~\eqref{eq:rho-lift-bound} enables the translation of the ambient basin guarantee into a guarantee for the projected objective. While the radius can be transferred in the ill-conditioned ($K >> 1$) and noisy cases, we highlight the following structural result:

\begin{theorem}[Radii comparison inequality]\label{thm:radii-comparison}
    Suppose 
    \begin{equation*}
        \nabla_{\bx \by}^2 \L(\btheta(\bx)) = 0 \text{ on } \Omega
    \end{equation*}
    and measurements are noiseless ($\bw = 0$, a.s.). In the topology induced by $\rho$, the radius of strong convexity for the projected objective admits the comparison
    \begin{equation*} 
    \sqrt{k_{\mathrm{vp}}}\frac{r_{\mathrm{LS}}}{c_\mathrm{vp}} \leq r_{\mathrm{vp}}
        \leq k_{\mathrm{vp}}\frac{r_{\mathrm{LS}}}{c_\mathrm{vp}}.
    \end{equation*}
\end{theorem}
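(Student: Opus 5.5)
Under the assumption $\nabla_{\bx \by}^2 \L(\btheta(\bx)) = 0$ on $\Omega$, the truncated Schur complement collapses. The derivative of the lifting map becomes $\Dd\btheta(\bx) = [\Id_p \; 0]^\top$, so by \eqref{eq:projected-hessian} $\H_\mathrm{vp}(\bx) = \nabla^2_{\bx\bx}\L(\btheta(\bx))$. In particular the coupling factor of Lemma~\ref{thm:projected-envelope} is $K=1$. Setting $\bw=0$ gives $\hat\by(\bx^\star)=\by^\star$, so $\btheta(\bx^\star)=\btheta^\star$, and \eqref{eq:rho-lift-bound} reduces to $\rho(\btheta(\bx),\btheta^\star)\le c_\mathrm{vp}\|\bx-\bx^\star\|$. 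The plan is to measure the projected radius in the pulled-back metric $\bx\mapsto c_\mathrm{vp}\|\bx-\bx^\star\|$, which dominates $\rho$ along the lifted surface, and to transfer the ambient quadratic envelope of Theorem~\ref{thm:least-squares-basin} through the lift.

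\emph{Lower bound.} By Lemma~\ref{thm:projected-envelope} with $K=1$, together with Theorem~\ref{thm:least-squares-basin},
\[
\lambda_{\min}(\H_\mathrm{vp}(\bx)) \ge k_\mathrm{vp}\lambda_{\min}(\H(\btheta^\star)) - c_1 s - c_2 s^2, \qquad s = c_\mathrm{vp}\|\bx-\bx^\star\|.
\]
Strong convexity persists as long as $c_1 s + c_2 s^2 < k_\mathrm{vp}\lambda_{\min}(\H(\btheta^\star))$. The positive root of this quadratic is
\[
r(k_\mathrm{vp}) = \frac{\sqrt{c_1^2+4c_2k_\mathrm{vp}\lambda_{\min}} - c_1}{2c_2}.
\]
I will then show $r(k)\ge\sqrt{k}\,r(1)$ for $k\ge1$. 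One route is to note that $\sqrt{c_1^2+4c_2k\lambda}\ge \sqrt{k}\sqrt{c_1^2+4c_2\lambda} - (\sqrt{k}-1)c_1$, which follows from $\sqrt{k}\sqrt{a+b}\le\sqrt{a+kb}+(\sqrt{k}-1)\sqrt{a}$ with $a=c_1^2$ and $b=4c_2\lambda$; that inequality can be checked by squaring. An equivalent route is to write $r(k) = 2k\lambda/(\sqrt{c_1^2+4c_2k\lambda}+c_1)$ and bound the denominator by $\sqrt{k}$ times its value at $k=1$. Either way, $r_\mathrm{vp}\ge \sqrt{k_\mathrm{vp}}\,r_\mathrm{LS}$ in the lifted metric. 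Converting to $\|\bx-\bx^\star\|$ divides by $c_\mathrm{vp}$, which gives the left inequality.

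\emph{Upper bound.} I will use concavity of $r(\cdot)$ in $k$: $r$ is a concave function of $k$ with $r(0)=0$, hence $r(k)\le k\,r(1)$ for $k\ge1$. Dividing by $c_\mathrm{vp}$ gives the right inequality. This step shows that the envelope-based radius obtained for the projected problem can grow at most linearly in the amplification factor.

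The main obstacle is the interpretation of $r_\mathrm{vp}$. The upper bound is an honest comparison only if $r_\mathrm{vp}$ is understood as the radius certified by the same Weyl-envelope argument (with $K=1$), rather than the true radius of strong convexity. I will state this explicitly, so that both inequalities are statements about the quadratic root $r(k_\mathrm{vp})$ in the pulled-back metric. A secondary point is to check that $k_\mathrm{vp}\ge1$ is precisely what the interlacing with the Schur complement provides under the block-diagonal hypothesis.
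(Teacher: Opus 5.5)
Your proposal is correct and follows the paper's argument in Appendix~\ref{app:vp-radius}: with $K=1$ and $\bw=0$ you reduce to the noiseless root $r(k_{\mathrm{vp}})$ in the scale $c_{\mathrm{vp}}\|\bx-\bx^\star\|$. Your rationalized-denominator route for the lower bound is exactly the paper's, and your concavity argument for the upper bound is an equivalent substitute for the paper's observation that $k_{\mathrm{vp}}\,(\sqrt{c_1^2+4c_2\lambda}+c_1)/(\sqrt{c_1^2+4c_2k_{\mathrm{vp}}\lambda}+c_1)\le k_{\mathrm{vp}}$. Your caveat that $r_{\mathrm{vp}}$ must be read as the envelope-certified root, not the true convexity radius, matches how the paper defines it (``the first positive root of $q$'').
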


\begin{proof}
    The proof proceeds by algebraic manipulations; details can be found in Appendix~\ref{app:vp-radius}.
\end{proof}

\begin{lemma}[Stability of Variable Projection Recovery]\label{cor:vp-stability}
    Denote $\alpha_{\mathrm{vp}}$ the coercivity constant of the projected Hessian. The local minimizer of the loss~\eqref{eq:projected-lstq} within the strong basin of attraction $\hat \bx$ satisfies
    \begin{multline}\label{eq:vp-stability}
        \rho(\hat{\btheta}_\mathrm{vp}, \btheta^\star) \leq (\sigma_2\|\by^\star\| + \sigma_1)\frac{\|\bm{J}_\mathrm{vp}(\bx^\star)^\top \bw\|}{\alpha_{\mathrm{vp}}}  \nonumber \\
        + \frac{\sigma_1}{\widetilde{\sigma}_{\min}} \|\bw\| + \frac{\sigma_1^2}{\widetilde{\sigma}_{\min}} \frac{\|\bm{J}_\mathrm{vp}(\bx^\star)^\top \bw\|}{\alpha_{\mathrm{vp}}} \|\by^\star\|.
    \end{multline}
\end{lemma}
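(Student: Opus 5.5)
The plan is to mirror the proof of Corollary~\ref{cor:ls-stability} on the projected objective, and then push the resulting bound on $\|\hat\bx - \bx^\star\|$ through the lifting map to obtain a bound in the unmixing metric. Write $\hat{\btheta}_\mathrm{vp} = \btheta(\hat\bx) = (\hat\bx, \hat\by(\hat\bx))$ and set $\bm{J}_\mathrm{vp}(\bx) \coloneq \Dd\bigl[(\Id - \A(\bx)\A(\bx)^+)\bz\bigr]$. With this notation, $\nabla \L_\mathrm{vp}(\bx) = \bm{J}_\mathrm{vp}(\bx)^\top (\Id - \A(\bx)\A(\bx)^+)\bz$.

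\emph{Step 1 (nonlinear error).} Inside the strong basin of the projected objective, $\H_\mathrm{vp}$ is $\alpha_\mathrm{vp}$-coercive, so $\nabla\L_\mathrm{vp}$ is $\alpha_\mathrm{vp}$-strongly monotone. Since $\nabla \L_\mathrm{vp}(\hat\bx) = 0$, the same chain of inequalities as in Corollary~\ref{cor:ls-stability} gives
\[
\alpha_\mathrm{vp}\|\hat\bx - \bx^\star\| \leq \|\nabla\L_\mathrm{vp}(\bx^\star)\|.
\]
At the ground truth, $(\Id - \A(\bx^\star)\A(\bx^\star)^+)\bz = (\Id - \A(\bx^\star)\A(\bx^\star)^+)\bw$, because the signal component lies in the range of $\A(\bx^\star)$. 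Hence $\|\nabla \L_\mathrm{vp}(\bx^\star)\|$ is controlled by $\|\bm{J}_\mathrm{vp}(\bx^\star)^\top \bw\|$, possibly after redefining $\bm{J}_\mathrm{vp}$ to absorb the projector, which is idempotent and non-expansive. This gives $\|\hat\bx - \bx^\star\| \leq \|\bm{J}_\mathrm{vp}(\bx^\star)^\top\bw\|/\alpha_\mathrm{vp}$.

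\emph{Step 2 (linear error).} I split
\[
\hat\by(\hat\bx) - \by^\star = \A(\hat\bx)^+\bigl(\A(\bx^\star) - \A(\hat\bx)\bigr)\by^\star + \A(\hat\bx)^+\bw,
\]
using $\A(\hat\bx)^+\A(\hat\bx) = \Id$. The mean value inequality with $\sigma_1$ and the uniform bound $\|\A(\cdot)^+\| \leq \widetilde{\sigma}_{\min}^{-1}$ then yield
\[
\|\hat\by - \by^\star\| \leq \frac{\sigma_1}{\widetilde{\sigma}_{\min}}\|\hat\bx-\bx^\star\|\|\by^\star\| + \frac{1}{\widetilde{\sigma}_{\min}}\|\bw\|.
\]
This is a sharper version of~\eqref{eq:feedforward-error}, obtained by comparing directly to $\by^\star$ rather than to $\hat\by(\bx^\star)$.

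\emph{Step 3 (assemble).} Plug both bounds into the definition~\eqref{eq:rho}. The $\bx$-part contributes $(\sigma_2\|\by^\star\|+\sigma_1)\|\bJ_\mathrm{vp}(\bx^\star)^\top\bw\|/\alpha_\mathrm{vp}$. The $\by$-part, multiplied by $\sigma_1$, contributes the terms $\frac{\sigma_1}{\widetilde{\sigma}_{\min}}\|\bw\|$ and $\frac{\sigma_1^2}{\widetilde{\sigma}_{\min}}\frac{\|\bJ_\mathrm{vp}(\bx^\star)^\top\bw\|}{\alpha_\mathrm{vp}}\|\by^\star\|$, which matches~\eqref{eq:vp-stability} term by term.

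The main obstacle is Step 1: making precise that $\nabla\L_\mathrm{vp}(\bx^\star)$ reduces to $\bm{J}_\mathrm{vp}(\bx^\star)^\top\bw$ for the intended definition of $\bm{J}_\mathrm{vp}$. I would first check this through the identity $\nabla\L_\mathrm{vp}(\bx) = \Dd\btheta(\bx)^\top\nabla\L(\btheta(\bx))$, the chain rule applied to the lifting map. If the identification fails, I would instead define $\bm{J}_\mathrm{vp}(\bx) \coloneq \bm{J}(\btheta(\bx))\Dd\btheta(\bx)$ and bound the residual at $\btheta(\bx^\star)$ by the projected noise. A secondary point is that $\bx^\star$ and $\hat\bx$ both lie in the basin, so that the monotonicity argument applies on the segment between them; this holds by convexity of the ball.
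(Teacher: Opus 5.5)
Your proposal is correct and follows the paper's proof. Strong monotonicity of $\nabla\L_\mathrm{vp}$ bounds $\|\hat\bx-\bx^\star\|$, a pseudo-inverse perturbation bound controls $\|\hat\by(\hat\bx)-\by^\star\|$, and both are substituted into $\rho$. Your Step 2 is the direct justification of the inequality the paper cites from~\eqref{eq:feedforward-error}; as stated there, that bound compares to $\hat\by(\bx^\star)$ and carries $1+\widetilde{\sigma}_{\min}^{-1}$. Your remark that $\bm{J}_\mathrm{vp}$ must absorb the projector $\Id-\A(\bx^\star)\A(\bx^\star)^+$ (as Kaufman's Jacobian does) settles a point the paper leaves implicit.
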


\begin{proof}
    Strong monotonicity of the gradient of the projected loss gives
    \begin{equation*}
        \|\hat{\bx} - \bx^\star \| \leq \frac{\|\bm{J}_\mathrm{vp}(\bx^\star)\top \bw\|}{\alpha_{\mathrm{vp}}}. 
    \end{equation*}
    Furthermore, from~\eqref{eq:feedforward-error} we obtain
    \begin{equation*}
        \|\hat{\by}(\bx) - \by^\star \| \leq  \frac{\sigma_1}{\widetilde{\sigma}_{\min}} \| \bx - \bx^\star \|\|\by^\star \| + \frac{1}{\widetilde{\sigma}_{\min}} \|\bw \|.
    \end{equation*}
    Combining the two previous equations into the definition of $\rho(\cdot, \cdot)$ gives the result.
\end{proof}

\subsection*{Discussion}

An alternative derivation of the radius of strong convexity can be constructed by bounding the Weyl envelope of~\eqref{eq:projected-hessian}. However, this envelope is a high degree polynomial in $\|\bx - \bx^\star\|$, severely hindering its interpretability. In contrast, the derivations leading up Theorem~\ref{thm:radii-comparison} are carefully constructed to expose the mechanism by which variable projection operates: restriction to optimal directions. This restriction leads to significantly better conditioned Jacobian and Hessian matrices, thus providing, in addition to improved numerical performance, tighter theoretical guarantees.

\section{Application to the PSF Unmixing Problem}\label{sec:psf-unmixing}
The generic theory developed above applies to any separable problem satisfying the stated regularity conditions. We now instantiate it for PSF unmixing, where the structure of $\A(\bx)$ enables the expression of the resulting bounds in terms of physically interpretable quantities.

Point spread function (PSF), or kernel, unmixing forms an important class of signal recovery problems with separable structure. In this setting, each column of the matrix $\A(\bx)$ is obtained by evaluating a parameterized PSF at shifted locations on a fixed sampling grid. Concretely, let
\begin{equation*}
    \mathcal{T} = \{t_{i,k}\} \subset [-T/2,T/2] ,\quad T > 0,
\end{equation*}
denote a dictionary of spike locations, comprising $p \geq 1$ groups of $q \geq 1$ spikes with no repeated locations, and let
\begin{equation*}
    g(\cdot, \cdot) : X \times \R \to \R
\end{equation*}
be a class of PSFs parameterized by the shape parameter $x \in X = [x_{\min}, x_{\max}]$. Given a sampling grid $\bm{t} \in \R^N$, each location $t_{i,k} \in \mathcal{T}$ generates a column of $\A(\bx)$ via
\begin{equation*}
    \ba_{i,k}(x_i) = g(x_i, \bm{t} - t_{i,k}),
\end{equation*}
where subtraction and evaluation are performed point-wise. We then define the matrix-valued map $\A : \Omega \to \R^{N \times pq}$, with $\Omega = X^p$, as
{\small
    \begin{equation*}
        \A(\bx)
        =
        [
        \ba_{1,1}(x_1), \dots, \ba_{1,q}(x_1),
        \dots,
        \ba_{p,1}(x_p), \dots, \ba_{p,q}(x_p)
        ].
    \end{equation*}
}
We thus pose PSF unmixing as the problem of finding shape parameters $\bx^\star$ and mixing coefficients (spike weights) $\by^\star$ from an observation $\bz$ obtained via~\eqref{eq:observation-model}.

A central structural assumption in PSF unmixing is that \emph{all spike blocks are active}. Meaning that for each spike group, at least one spike is present in the observation $\bz$, thus ensuring identifiability of all latent parameters. Furthermore, we suppose the number of samples $N$ and the parameter domain $X$ are specified so that $\A(\bx)$ is full rank everywhere on $\Omega$. Finally, we assume that for each fixed $t \in [-T/2,T/2]$, the function $x \mapsto g(x,t)$ is three times continuously differentiable, and that for $ k \in \{0,1,2,3\}$ the derivatives satisfy
\begin{equation*}
    \tfrac{\partial^k}{\partial x^k} g(x,\cdot)
    \in L^2(\R, \diff t).
\end{equation*}

Under these conditions, the theory developed in the previous section transfers to the PSF unmixing setting: Theorems~\ref{thm:least-squares-basin} and~\ref{thm:radii-comparison}, alongside Corollaries~\ref{cor:ls-stability} and~\ref{cor:vp-stability} apply.

\subsection{Simplification of Spectral Constants}

The column-block structure of $\A(\cdot)$ implies the higher-order Fréchet derivatives become diagonal. Therefore, finding each spectral constant $\sigma_k$~\eqref{eq:spectral-norms} collapses to at most $p$ one-dimensional, completely decoupled problems. The following lemma states this fact:

\begin{lemma}[Global operator norms]\label{thm:sigmas}
    For $i \in \{1, \dots, p\}$, and $k \in \{0,1,2,3\}$ define the column block derivative
    \begin{equation*}
        \partial_k \A_{i}(x_i) \coloneq \left[\tfrac{\partial^k}{\partial x_i^k} \ba_{i, 1}(x_i), \dots, \tfrac{\partial^k}{\partial x_i^k} \ba_{i, q}(x_i) \right] \in \R^{N \times q}.
    \end{equation*}
    The constants $\sigma_k$, defined in~\eqref{eq:spectral-norms}, satisfy
    \begin{equation*}
        \sigma_k \leq \begin{cases}
            \sqrt{p}\max\limits_{i \in \{1, \dots, p\}} \sup\limits_{x_i \in X} \| \A_{i}(x_i) \| & k = 0, \\
            \max\limits_{i \in \{1, \dots, p\}} \sup\limits_{x_i \in X} \| \partial_k \A_{i}(x_i) \| & \text{ else}.
        \end{cases}
    \end{equation*}
\end{lemma}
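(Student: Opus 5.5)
The plan is to exploit the block-column structure of $\A(\bx)=[\A_1(x_1),\dots,\A_p(x_p)]$, where block $\A_i$ depends only on the scalar $x_i$. The case $k=0$ and the case $k\ge 1$ will be handled separately.

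\emph{Case $k=0$.} For any $\bx\in\Omega$ and $\by=(\by_1,\dots,\by_p)$ with $\by_i\in\R^q$, we have $\A(\bx)\by=\sum_i \A_i(x_i)\by_i$. By the triangle inequality and Cauchy--Schwarz,
\begin{equation*}
\|\A(\bx)\by\|\le \sum_i \|\A_i(x_i)\|\|\by_i\| \le \max_i\|\A_i(x_i)\| \sqrt{p}\,\|\by\|.
\end{equation*}
Taking the supremum over $\bx\in\Omega=X^p$ then decouples into suprema over each $x_i\in X$, which gives the $\sqrt{p}$ bound.

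\emph{Case $k\ge1$.} First write the Fréchet derivative explicitly. $\Dd^k\A(\bx)$ is a $k$-linear map $(\R^p)^k\to\R^{N\times pq}$. Since block $i$ depends only on $x_i$, every mixed partial derivative vanishes, so
\begin{equation*}
\Dd^k\A(\bx)[\bu^{(1)},\dots,\bu^{(k)}]=\big[u^{(1)}_1\cdots u^{(k)}_1\,\partial_k\A_1(x_1),\dots,u^{(1)}_p\cdots u^{(k)}_p\,\partial_k\A_p(x_p)\big].
\end{equation*}
This is the sense in which the derivatives are ``diagonal''. Fix the norm convention consistent with Appendix~\ref{app:jacobian-perturbation}: the operator norm is the supremum over unit $\bu^{(j)}$ of the spectral norm of the resulting matrix. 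Apply this matrix to $\by$ and write $c_i=\prod_j u^{(j)}_i$, so that
\begin{equation*}
\|\textstyle\sum_i c_i\,\partial_k\A_i(x_i)\by_i\|\le \max_i\|\partial_k\A_i(x_i)\|\,\sum_i |c_i|\,\|\by_i\|.
\end{equation*}
It remains to show $\sum_i|c_i|\|\by_i\|\le\|\by\|$. Cauchy--Schwarz gives $\sum_i|c_i|\|\by_i\|\le \|\bm c\|\,\|\by\|$. Since each $|u^{(j)}_i|\le1$,
\begin{equation*}
\|\bm c\|^2=\sum_i\prod_j (u^{(j)}_i)^2\le\sum_i (u^{(1)}_i)^2=1,
\end{equation*}
so no $\sqrt p$ factor appears. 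Taking suprema completes the bound.

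The main obstacle is the $k\ge1$ step. Two things there need care: fixing the operator-norm convention for the multilinear map, and verifying that the product of coordinates $\bm c$ has norm at most one. That verification is exactly what removes the $\sqrt p$ factor present in the $k=0$ case. I will also check that the differentiability assumptions on $g$ justify the termwise differentiation of columns, and that the $L^2$ hypotheses make the suprema over the compact interval $X$ finite.
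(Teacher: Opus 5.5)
Your proof is correct and follows the paper's route. Both arguments use the block-column decomposition $\A(\bx)\by=\sum_i \A_i(x_i)\by_i$, the diagonal structure of $\Dd^k\A(\bx)$ coming from vanishing mixed partials, and Cauchy--Schwarz to absorb the direction coefficients; your bound $\|\bm c\|\le 1$ spells out the ``analogous reasoning'' the paper only invokes for $k=2,3$. For $k=0$, your triangle-inequality-then-Cauchy--Schwarz step is the sound version of the paper's argument: the paper's displayed inequality $\|\sum_i \A_i(x_i)\by_i\|\le(\sum_i\|\A_i(x_i)\by_i\|^2)^{1/2}$ would require the blocks to have mutually orthogonal ranges.
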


\begin{proof}
    For the $k=0$ case, we simply write out, for any $\bx \in \Omega$,
    \begin{equation*}
        \| \A(\bx) \| = \sup_{\| \by \| = 1} \| \A(\bx)\by  \| \leq \sup_{\| \by \| = 1} \sqrt{ \sum_{i=1}^p \| \A_i(x_i)\by_i  \|^2 },
    \end{equation*}
    where $\by_i \in \R^q$ is the $i$-th block of $\by$. Taking the supremum over $\Omega$ gives the result.
    For $k \geq 1$,  the key observation is that the cross derivatives $\frac{\partial^k}{\partial x_j^k} \A_i(x_i)$ vanish. 
    Let $\bu \in \R^p$. Observe the directional derivative $\Dd \A(\bx)[\bu]$ is an element of $\R^{N \times pq}$. The constant $\sigma_1$ then satisfies
    \begin{align*}
        \sigma_1 &= \| \Dd \A(\bx) \| = \sup_{\| \bu \| = 1} \| \Dd \A(\bx)[\bu] \| \\
                               &= \sup_{\| \bu \| = 1}  \sup_{\| \by \| = 1} \| \Dd \A(\bx)[\bu]\by \| \\
                               &= \sup\limits_{
                            \begin{array}{c}
                                    \| \bu \| = 1  \\
                                    \| \by \| = 1
                               \end{array}} \left\| \sum_{i=1}^p u_i \partial \A_i(x_i)\by_i \right\| \\
                               &\leq \sup\limits_{ \begin{array}{c}
                                    \| \bu \| = 1  \\
                                    \| \by \| = 1
                               \end{array}}  \max_{i \in \{1, \dots, p\}} \| \partial \A_i(x_i)\| \sum_{i=1}^p |u_i| \| \by_i \| \\
                               &\leq \max_{i \in \{1, \dots, p\}} \| \partial \A_i(x_i)\|. 
    \end{align*}
    The cases $k = 2$ and $ k=3$ are dealt with by analogous reasoning.
\end{proof}

\subsection{Coherence}

\begin{figure}[t]
    \centering
    \includegraphics[width=\linewidth]{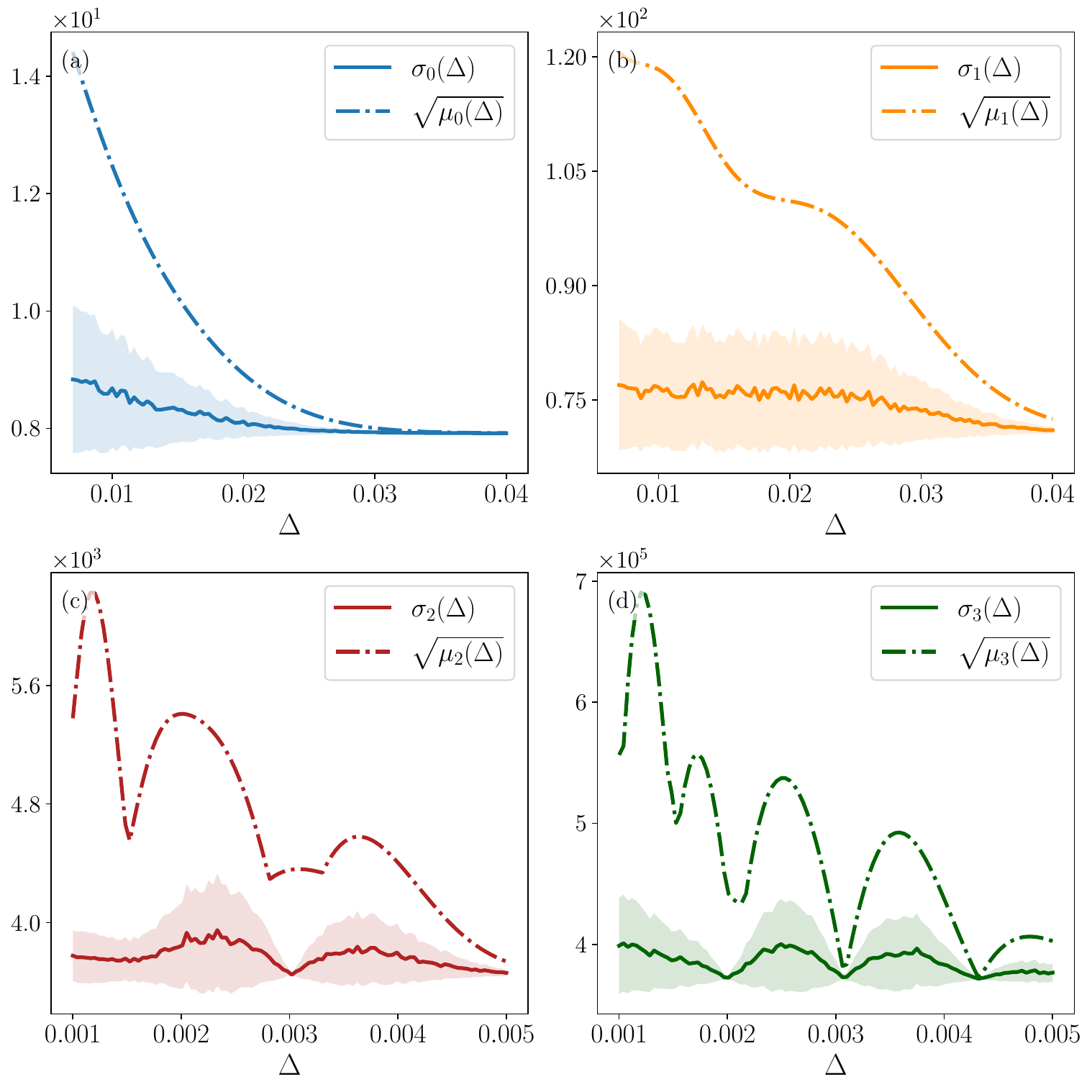}
    \caption{Empirical validation of Theorem~\ref{thm:coherence} for a Gaussian kernel. (a) $k=0$, (b) $k=1$, (c) $k=2$, (d) and $k=3$. Solid curves show the empirical mean of the spectral constants $\sigma_k$ over random dictionaries with minimal separation $\Delta$, with shaded regions indicating one standard deviation. Dotted curves depict the coherence envelope $\mu_k(\Delta)$.}
    \label{fig:coherence}
\end{figure}

In this setting, the size of the strong basin of attraction, and the recovery error are characterized by the \emph{minimal separation} of the support, defined as
\begin{equation} \label{eq:delta}
    \Delta \coloneqq \min_{t,t' \in \T, t \neq t'} |t - t'|.
\end{equation}
When the dictionary $\bm{A}(\bm{x})$ is fixed, the minimal separation~\eqref{eq:delta} 
emerges as a fundamental quantity governing both the feasibility and stability of the 
inverse problem. Its role has been thoroughly investigated in the context of spike 
deconvolution~\cite{chi2020harnessing,ferreira2020stable,hockmann2023weak}, and 
subsequently extended to a broader class of continuous measurement 
operators~\cite{poon2023GeometryOfftheGrida}.

Herein, we introduce a notion of \emph{coherence} that quantifies the similarity 
between kernels and their derivatives when evaluated on a support separated by at 
least $\Delta$. For $k \in \{0,1,2,3\}$, the \emph{$\Delta$-separated correlation} 
of the $k$th kernel derivative is defined as
\begin{equation*}
    \varrho_k(x, \Delta) \coloneq \sup\limits_{|\delta| \geq \Delta} 
    \left|\left\langle \tfrac{\partial^k}{\partial x^k} g(x, \bm{t}),\, 
    \tfrac{\partial^k}{\partial x^k}g(x, \bm{t} - \delta) \right\rangle\right|;
\end{equation*}
For a support separation $\Delta$, the \emph{coherence} of the $k$th kernel 
derivative $\mu_k(\Delta)$ is then given by
\begin{equation*}
    \mu_k(\Delta) \coloneq \sup_{x \in X} \sum_{m \in \Z} \varrho_k(x, |m|\Delta). 
\end{equation*}
In this setting, the abstract spectral constants~\eqref{eq:spectral-norms}, which define the radius of strong convexity, reduce to the dictionary's coherence. In particular, we have the following theorem:
\begin{theorem}\label{thm:coherence}
    Let the minimal separation of the support $\Delta > 0$. The spectral constants~\eqref{eq:spectral-norms} satisfy,
    \begin{equation*}
        \sigma_k \leq \begin{cases}
            \sqrt{p\mu_k(\Delta)} & k = 0, \\
            \sqrt{\mu_k(\Delta)}  & k \in \{1,2,3\}.
        \end{cases}
    \end{equation*}
\end{theorem}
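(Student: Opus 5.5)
The plan is to reduce everything to Lemma~\ref{thm:sigmas}, which already bounds $\sigma_k$ by the block norms $\sup_{x_i\in X}\|\partial_k \A_i(x_i)\|$, with an additional factor $\sqrt{p}$ when $k=0$. It therefore suffices to show, for each fixed $i$ and $x\in X$, that
\[
\|\partial_k \A_i(x)\|^2 \leq \mu_k(\Delta).
\]
Taking square roots and then the supremum over $x$ and $i$ gives the theorem. The natural tool is the Gram matrix $\bm{G}=\partial_k\A_i(x)^\top\partial_k\A_i(x)\in\R^{q\times q}$, since $\|\partial_k\A_i(x)\|^2=\lambda_{\max}(\bm{G})$.

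First, I would write out the entries of $\bm{G}$. They are
\[
G_{k',l}=\big\langle \tfrac{\partial^k}{\partial x^k}g(x,\bm t-t_{i,k'}),\,\tfrac{\partial^k}{\partial x^k}g(x,\bm t-t_{i,l})\big\rangle .
\]
To compare these with the definition of $\varrho_k$, I would treat the inner products as shift-invariant, so that $G_{k',l}$ depends only on $\delta=t_{i,k'}-t_{i,l}$. This holds exactly in the $L^2(\R,\diff t)$ idealisation assumed in the paper, where the sampled inner product is identified with the continuous one; for a finite grid I would state it as an approximation or as part of the model. Because the $q$ locations in group $i$ are distinct and pairwise separated by at least $\Delta$, each off-diagonal entry satisfies $|G_{k',l}|\le\varrho_k(x,|t_{i,k'}-t_{i,l}|)$.

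Second, I would bound $\lambda_{\max}(\bm G)$ by the maximal absolute row sum, using Gershgorin or the bound $\|\bm G\|\le\|\bm G\|_\infty$ for symmetric matrices. This gives
\[
\lambda_{\max}(\bm G)\le \max_{k'}\sum_{l}|G_{k',l}|.
\]

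Third, I would control each row sum by the coherence series. Fix a row $k'$ and order the other locations by their distance to $t_{i,k'}$. On each side of $t_{i,k'}$, the $m$-th nearest location lies at distance at least $m\Delta$, by separation. This step needs $\varrho_k(x,\cdot)$ to be nonincreasing in its second argument, which is true by construction because it is a supremum over $|\delta|\ge\Delta$. So the $m$-th neighbour on each side contributes at most $\varrho_k(x,m\Delta)$. The diagonal entry equals the $m=0$ term: it is $\|\partial_x^k g(x,\cdot)\|^2$, and this is at most $\varrho_k(x,0)$ since the supremum at $\Delta=0$ includes $\delta=0$. Summing over both sides gives
\[
\sum_l|G_{k',l}|\le\sum_{m\in\Z}\varrho_k(x,|m|\Delta)\le\mu_k(\Delta).
\]

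Finally, for $k=0$ the extra $\sqrt p$ comes directly from Lemma~\ref{thm:sigmas}. For $k\ge1$ the block-diagonal structure gives no such factor.

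The main obstacle is the neighbour-counting argument in the third step, together with the shift-invariance identification in the first. Monotonicity of $\varrho_k$ in its second argument is essential there, and it is guaranteed by the $\sup_{|\delta|\ge\Delta}$ in the definition. Shift invariance of the inner products fails exactly on a finite sampling grid. I would first handle it by arguing in $L^2(\R)$ as the hypotheses suggest, and then note that truncation to the grid only removes terms.
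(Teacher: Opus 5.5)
Your proposal follows essentially the same route as the paper. You reduce via Lemma~\ref{thm:sigmas} to the block Gram matrix, bound $\lambda_{\max}$ by a Gershgorin row sum, and use the sorted positions (the $m$-th neighbour on each side lies at distance at least $m\Delta$) together with the monotonicity of $\varrho_k(x,\cdot)$ to dominate each row sum by $\sum_{m\in\mathbb{Z}}\varrho_k(x,|m|\Delta)\le\mu_k(\Delta)$.

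The paper makes the same shift-invariance identification of grid inner products without comment, so treating it as a modelling assumption is fine. However, drop your closing claim that ``truncation to the grid only removes terms''. Sampling on a grid is not a truncation of the $L^2$ integral, and discarding signed terms can increase an absolute correlation, so that remark does not justify the identification.
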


\begin{proof}

From Lemma~\ref{thm:sigmas}, we bound the operator norms of the column blocks
\begin{equation*}
   \| \partial_k\A_i(x_i) \| = \sqrt{\lambda_{\max}(\partial_k\A_i(x_i)^\top\partial_k\A_i(x_i))}.
\end{equation*}
Therefore, we seek an upper bound on the spectral radius of the associated
Gram matrix that holds for all admissible location dictionaries. The Gramian entries are correlations between shifted kernel derivatives. For $m,n \in \{1,\dots,d\}$, the $n$-th entry of the $m$-th row writes
\begin{equation*}
    \big\langle\tfrac{\partial^k}{\partial x_i^k} g(x_i, \bm{t} - t_{i, n}), \tfrac{\partial^k}{\partial x_i^k}g(x_i, \bm{t} - t_{i, m})\big\rangle.
\end{equation*}
These correlations are characterized by the relative distance
\begin{equation*}
    \Delta_{n,m} \coloneq |t_{i,n}-t_{i,m}|.
\end{equation*}
Suppose, w.l.o.g., the locations are organized in ascending order. By minimal separation, 
\begin{equation*}
    (|n - m| - 1)\Delta \le \Delta_{n,m} \le (|n - m| + 1)\Delta.
\end{equation*}
The preceding relation holds independently of the block index $i \in \{1, \dots, p\}$. As a consequence, applying Gershgorin's theorem~\cite[Theorem 8.1.3]{golub2013MatrixComputations}, gives, for all $i \in \{1, \dots, p\}$,
\begin{align*}
    \| \partial_k\A_i(x_i) \|^2 &\leq \sum_{m = 1}^q |\big\langle\tfrac{\partial^k}{\partial x_i^k} g(x_i, \bm{t} - t_{i, n}), \tfrac{\partial^k}{\partial x_i^k}g(x_i, \bm{t} - t_{i, m} )\big\rangle| \\
    &\leq \sum_{m \in \Z} \varrho_k(x, |m|\Delta).
\end{align*}
Taking the supremum over $\Omega$ completes the proof.
\end{proof}

The above bound depends solely on the separation and the kernel class, and is independent of the specific realization of the dictionary. Figure~\ref{fig:coherence} illustrates this uniformity for the Gaussian kernel
\begin{equation*}
    g(x,t) = \exp\!\left(-\tfrac{t^2}{x^2}\right),
\end{equation*}
under the \emph{unit-speed reparametrization}, defined by evaluating the function-valued curve $x \mapsto g(x, \cdot)$ along the arc
\begin{equation*}
x \mapsto s(x) = \int_{x_{\min}}^x \| \tfrac{\partial}{\partial \nu} g(\nu, \cdot) \|_{L^2(\R, \diff t)} \diff \nu,
\end{equation*}
yielding the reparametrized kernel
\begin{equation*}
    k(x,t) \coloneq g(s(x), t),
\end{equation*}
so that variations in $x$ induce proportional changes in $g(x,\cdot)$ in $L^2(\R)$.

For each value of $\Delta$, we generate 100 randomized dictionaries with minimal separation $\Delta$ and compute the spectral constants $\sigma_k$ together with the coherence $\mu_k(\Delta)$. In each panel, the solid curve shows the mean value of $\sigma_k$ as a function of $\Delta$, with shaded regions indicating one standard deviation across realizations, while the dotted curve depicts $\mu_k(\Delta)$. For each $k$, the coherence acts as a uniform envelope for $\sigma_k$, and the bound becomes increasingly tight as the separation grows. Indeed, variance across dictionaries collapses as $\Delta$ increases.

Beyond separation, the dominant factor that determines optimization geometry is the rate of decay of $\mu_k(\Delta)$ as $\Delta$ increases. This rate of decay is governed by the decay rate of kernel tails. This mechanism is illustrated in Fig.~\ref{fig:coherence-decay}. We consider the unit-speed reparametrized $u$-Laplace kernel
\begin{equation*}
g_u(x, t) = \exp\left( -\tfrac{|t|^u}{x^u} \right),
\end{equation*}
for $u=2$ (Gaussian kernel) and $u=5$ (faster decay). Panels (a) and (b) show the kernels and their first derivatives with respect to $x$, respectively. Panels (c) and (d) display the empirical spectral constants $\sigma_k$ together with the corresponding coherences $\mu_k(\Delta)$ as functions of $\Delta$. Faster tail decay yields accelerated coherence decay, leading to smaller spectral constants for the same separation. Therefore, a smaller minimum separation is sufficient to achieve a better-conditioned loss geometry.

\begin{figure}[t]
    \centering
    \includegraphics[width=\linewidth]{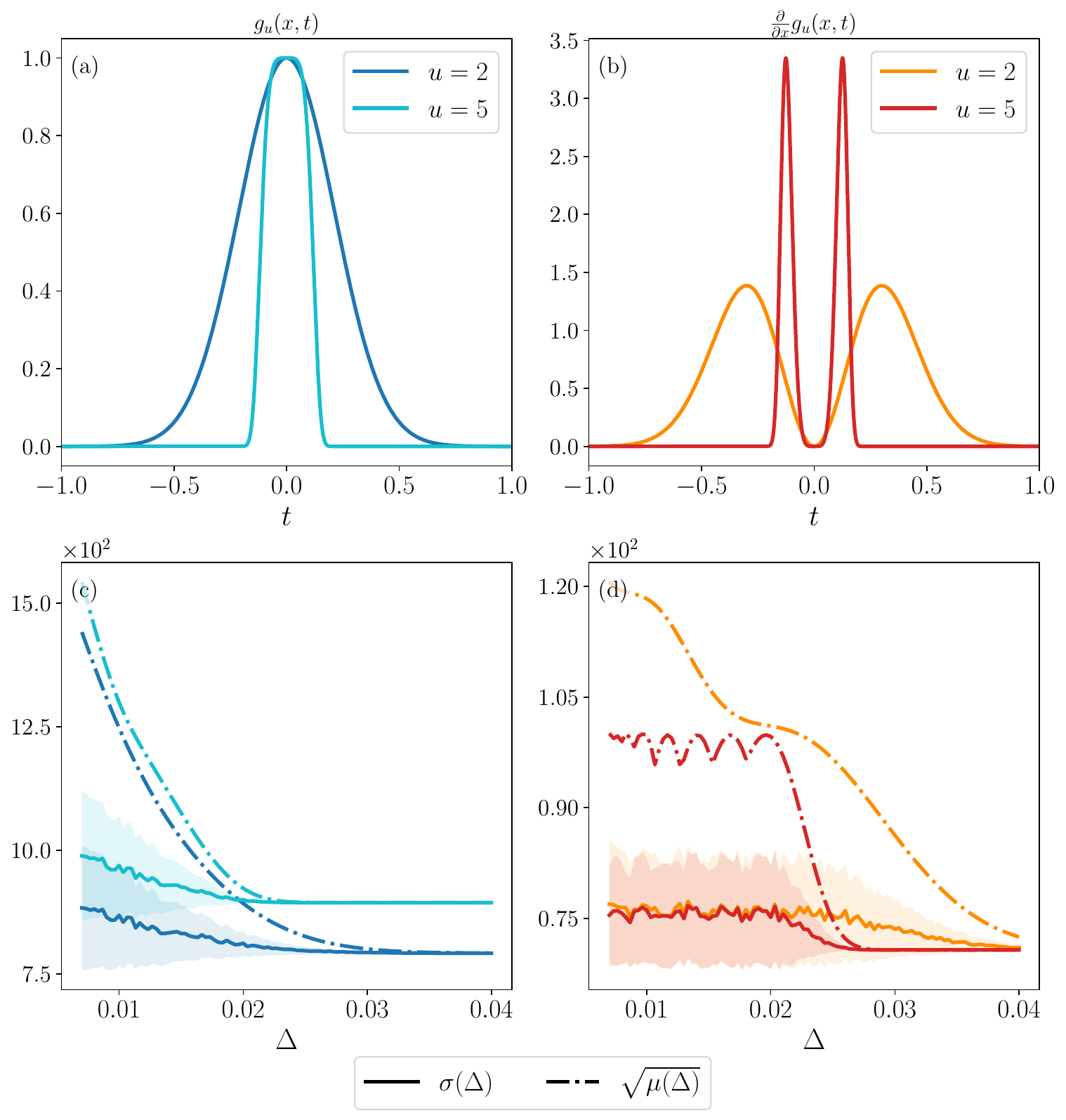}
    \caption{Coherence for kernels with different tail decays. (a) kernels $g_2(x, \cdot)$ and $g_5(x, \cdot)$. (b) first derivatives $\tfrac{\partial}{\partial x} g_2(x, \cdot)$ and $\tfrac{\partial}{\partial x} g_5(x, \cdot)$. (c) Empirical global spectral constant $\sigma_0$ vs coherence $\mu_0$ as a function of $\Delta$. (d) Empirical global spectral constant $\sigma_1$ vs coherence $\mu_1$ as a function of $\Delta$.}
    \label{fig:coherence-decay}
\end{figure}

\section{Numerics}\label{sec:numerics}

We validate the theory with numerical experiments in the PSF unmixing setting. Throughout, we take the feasible set $\Omega = [x_{\min}, x_{\max}]^p$ and the ground truth parameters 
\begin{equation*}
    \bx^\star = \tfrac{1}{2}(x_{\min} + x_{\max})\bm{1}_p, 
    \qquad 
    \by^\star = \bm{1}_{pq}.
\end{equation*}
Random dictionaries with prescribed minimal separation are generated by rejection sampling. Given $pq$ spikes, the first is drawn uniformly on $I=[-T/2,T/2]$, the second is placed at distance $\pm \Delta$, and each subsequent spike is sampled uniformly while rejecting locations within $\Delta$ of existing spikes. We use $N=10^4$ and $\Delta=5\cdot10^{-3}$ throughout.

We study the evolution of the local geometry of problems~\eqref{eq:lstq} and~\eqref{eq:projected-lstq} with the number of nonlinear parameters $p$ and block size $q$. For each configuration $(p,q)$, a dictionary is sampled once and kept fixed.

Code for reproducibility is available at~\url{https://github.com/smichelena/PsfUnmixing}.

\subsection{Empirical Basin Size Estimates}

The strong basin size of the PSF unmixing problem described in Section~\ref{sec:psf-unmixing} cannot be computed exactly. We therefore probe the landscape via Monte Carlo sampling and compare empirical estimates with the analytical predictions.

For each experiment, 100 perturbed parameters are drawn. Perturbation magnitudes are measured in the unmixing metric~\eqref{eq:rho} for least squares and in the $\ell_2$ metric on $\Omega$ for the projected estimator. For each sample, we compute the minimum Hessian eigenvalue and the corresponding Weyl-envelope bounds~\eqref{eq:weyl}, and compare them with the theoretical bounds from Theorem~\ref{thm:least-squares-basin} in the least squares case and~\ref{thm:projected-envelope} for variable projection. The empirical strong convexity radius is the largest perturbation for which the Hessian remains positive definite over all samples.
Figures~\ref{fig:ls-envelopes} and~\ref{fig:vp-envelopes} report results for noiseless least squares and variable projection. The blue curve shows the minimum eigenvalue, the orange dashed curve the Weyl-envelope estimate, and the red dotted curve the analytical bound. For variable projection, the green dash–dot curve shows the restricted envelope bound~\eqref{eq:projected-envelope}.

\begin{figure}[t]
    \centering
    \begin{tikzpicture}
    
    \node (env) {\includegraphics[width=\linewidth]{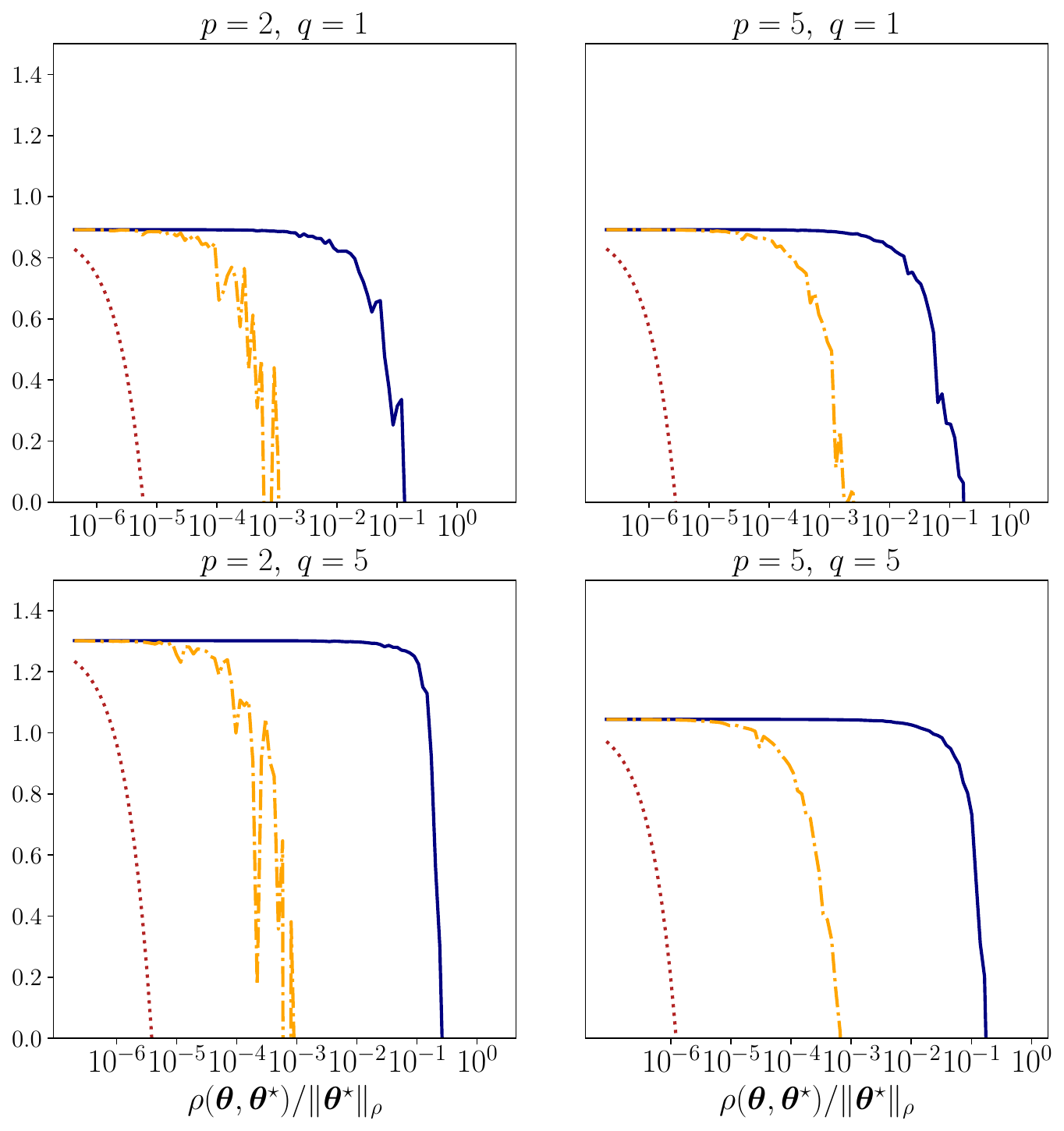}};
    
    \end{tikzpicture}
    \caption{Noiseless empirical basin estimates for least squares. Blue: sampled minimum eigenvalue. Orange dashed: Weyl envelope estimate. Red dotted: analytical lower bound.}
    \label{fig:ls-envelopes}
\end{figure}

\begin{figure}[t]
    \centering
    \begin{tikzpicture}
    
    \node (env) {\includegraphics[width=\linewidth]{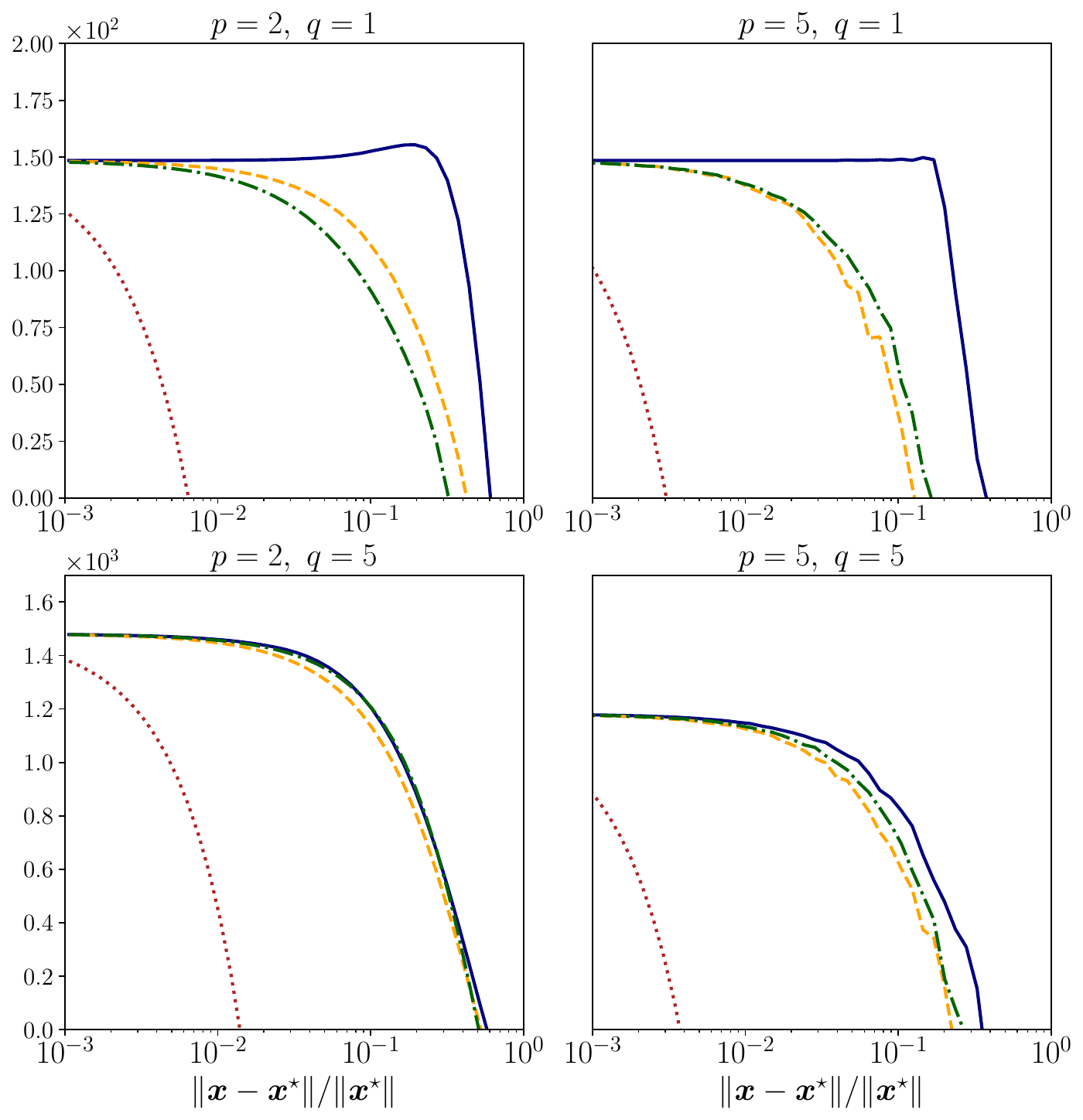}};
    
    \end{tikzpicture}
    \caption{Noiseless empirical basin estimates for projected least squares. Blue: the minimum eigenvalue. Orange dashed: the Weyl-envelope estimate. Green dashed: the Weyl-envelope bound using the full Hessian restricted to the optimal value manifold~\ref{thm:projected-envelope}. Red dotted: and the analytical prediction.}
    \label{fig:vp-envelopes}
\end{figure}

We repeat the experiments for $(p,q)=(2,1)$ and $(p,q)=(5,5)$ with additive Gaussian noise at $\mathrm{SNR}=0\,\mathrm{dB}$ over 30 realizations, using the same dictionaries as in the noiseless case.
Figures~\ref{fig:noisy-envelopes} and~\ref{fig:varpro-noisy-envelopes} summarize the results for least squares and variable projection. Black curves show noiseless baselines. Colored curves denote means across realizations, and shaded regions indicate one standard deviation.

\begin{figure}[t]
    \centering
    \begin{tikzpicture}
    
    \node (env) {\includegraphics[width=\linewidth]{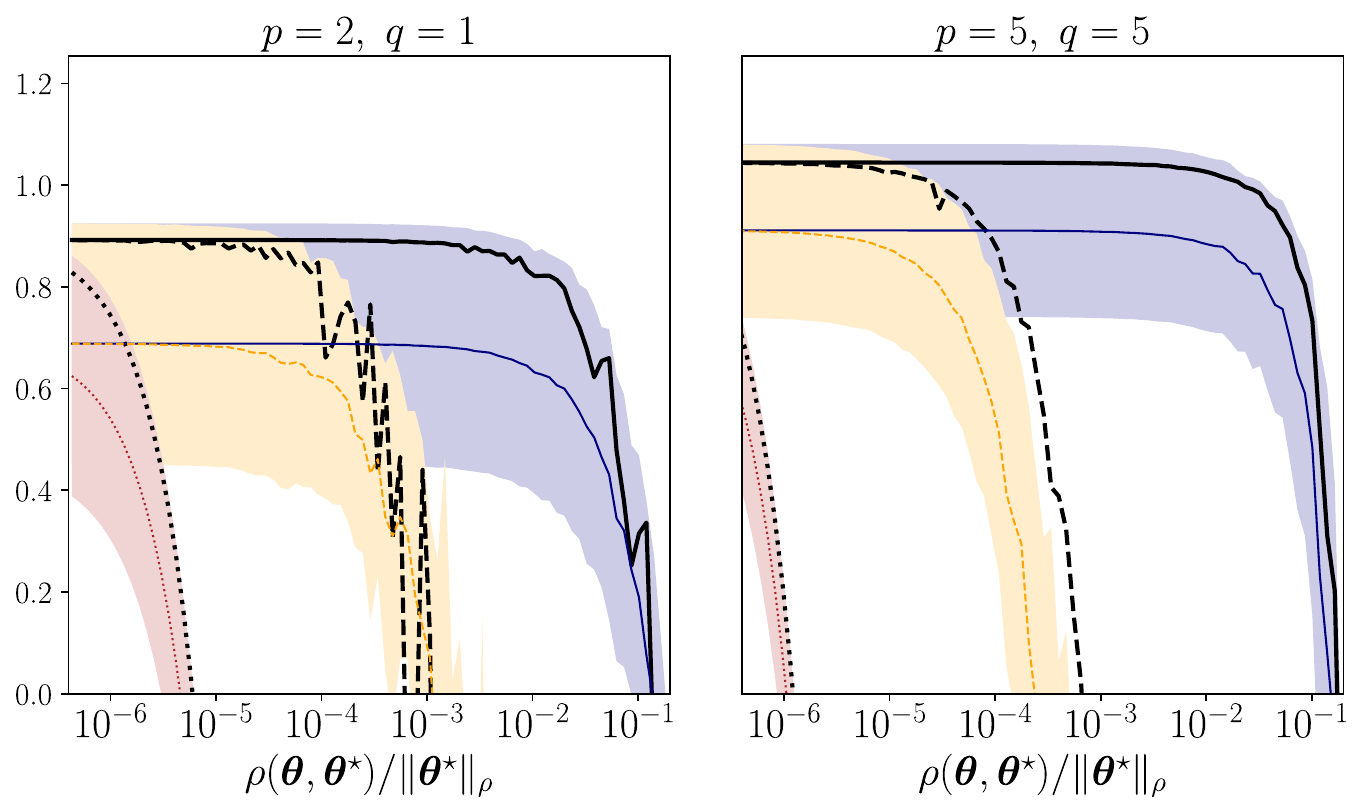}};
    
    \end{tikzpicture}
    \caption{Noisy empirical basin estimates for least squares at 0 $\mathrm{dB}$. Black curves show the noiseless baseline. Colored curves indicate mean values across noise realizations. Blue: true minimum eigenvalue. Orange dashed: Weyl-envelope estimate. Red dotted: and analytical prediction. Shaded regions represent one standard deviation over noise realizations.}
    \label{fig:noisy-envelopes}
\end{figure}

\begin{figure}[t]
    \centering
    \begin{tikzpicture}
    
    \node (env) {\includegraphics[width=\linewidth]{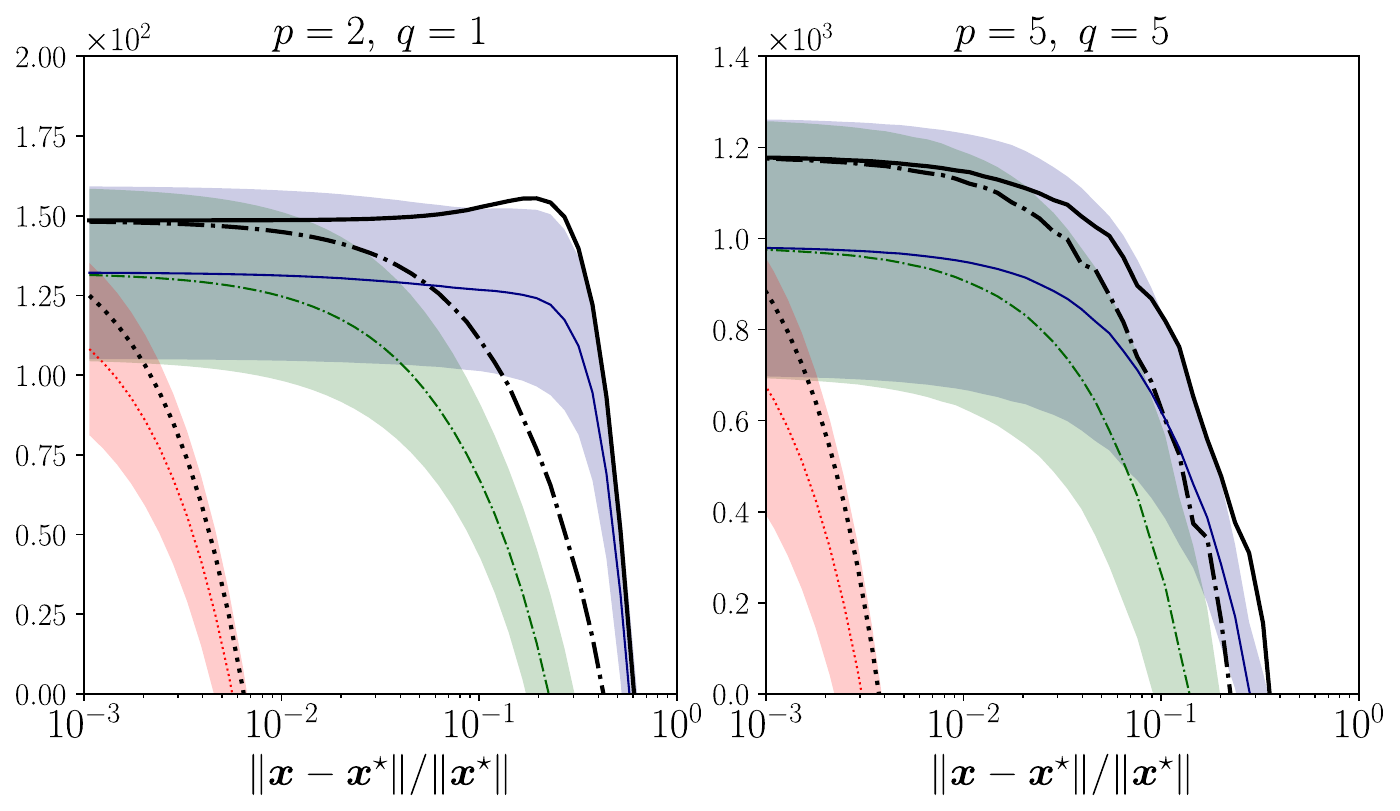}};
    
    \end{tikzpicture}
    \caption{Noisy empirical basin estimates for projected least squares at 0 $\mathrm{dB}$. Black curves show the noiseless baselines. Colored curves indicate mean values across noise realizations. Blue: the minimum eigenvalue. Orange dashed: the Weyl-envelope estimate. Green dashed: the Weyl-envelope bound using the full Hessian restricted to the optimal value manifold. Red dotted: and the analytical prediction. Shaded regions represent one standard deviation over noise realizations.}
    \label{fig:varpro-noisy-envelopes}
\end{figure}

As $q$ increases, the minimum eigenvalue of the Hessian matrix grows, reflecting improved conditioning from larger column blocks and hence more information per nonlinear parameter. For variable projection, the Weyl envelope and restricted envelope become tight on the minimum eigenvalue, indicating sharper bounds and better overall conditioning. The projected minimum eigenvalue is also visibly amplified. Furthermore, we observe that this behavior is robust to strong noise. These experiments serve as strong validation of section~\ref{sec:local-geometry}.

\subsection{Stability}

We evaluate recovery stability of the unprojected and projected least square estimators defined in p
Problems~\eqref{eq:lstq} and~\eqref{eq:projected-lstq} for $(p,q)=(2,1)$ and $(p,q)=(5,5)$. Measurements are corrupted with Gaussian noise for $\mathrm{SNR}\in[-10,20]\,\mathrm{dB}$, with 100 realizations per noise level. Both estimators are initialized inside their basins and recovered using the Levenberg--Marquardt algorithm. Empirical recovery errors are measured in the unmixing metric $\rho$ given by Equation~\eqref{eq:rho}. 
Analytical stability bounds predicted by Corollary~\ref{cor:ls-stability} and Lemma~\ref{cor:vp-stability} are evaluated across the same SNR range and compared with empirical errors.

Figure~\ref{fig:stability} shows that, when initialized in the basin, both methods converge to the same solution and achieve identical empirical performance. However, the upper bounds on the recovery error of least squares, predicted by Corollary~\ref{cor:ls-stability} are highly pessimistic, overestimating errors by up to four orders of magnitude. In contrast, the bounds on the projected least squares recovery error, predicted by Lemma~\ref{cor:vp-stability} closely track empirical behavior.

This gap stems from Jacobian conditioning. For least squares, the Jacobian condition numbers at the ground truth are approximately $27$ for $(p,q) =(2,1)$ and $45$ for $(p,q) = (5,5)$, leading to substantial amplification in the stability bound through factors proportional to $\sigma_{\min}(\bm{J}(\btheta^\star))^{-4}$. Under variable projection, these collapse to about $1.4$ and $1.7$, explaining the significant improvement in predictive accuracy of the projected bounds. These results show that variable projection fundamentally improves the conditioning governing recovery stability, not merely optimization behavior.

\begin{figure}[t]
    \centering
    \includegraphics[width=\linewidth]{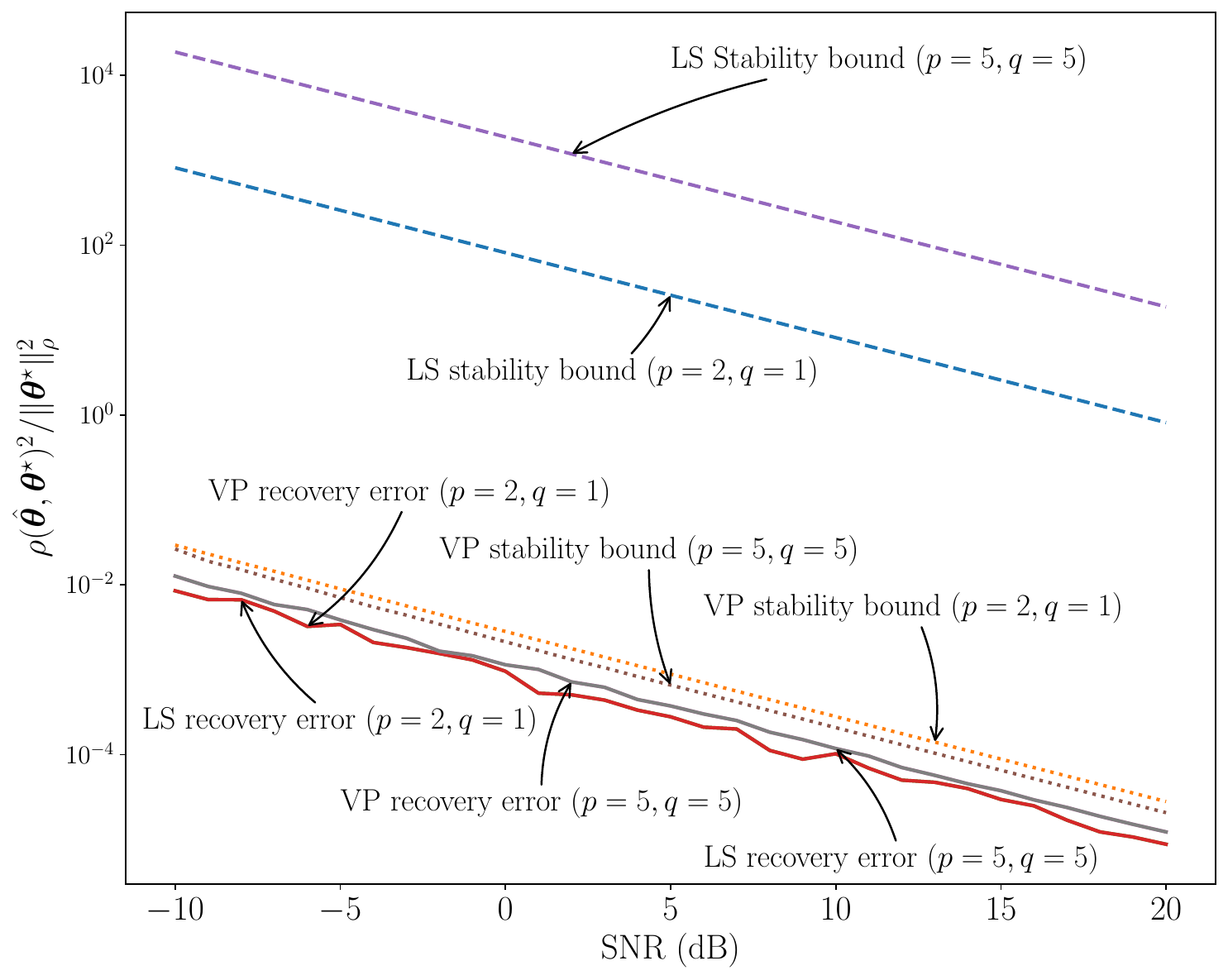}
    \caption{Empirical recovery error and analytical stability bounds in the PSF unmixing setting for $(p,q)=(2,1)$ and $(p,q)=(5,5)$ across SNR levels ranging from $-10,\mathrm{dB}$ to $20,\mathrm{dB}$. Solid curves show empirical mean recovery error measured in the unmixing metric $\rho$~\eqref{eq:rho} over 100 noise realizations, while dashed curves show the analytical stability bounds predicted by theory.}
    \label{fig:stability}
\end{figure}

\subsection{Algorithmic Convergence and Coherence}

In this section, we demonstrate that the local geometry predicted by the theory accurately characterizes algorithmic behavior. Specifically, we show that coherence governs optimization through its effect on convexity: the strong convexity radius acts as a reliable predictor of the empirical convergence region, and its alignment with the latter determines the efficiency of local optimization. We validate this mechanism by comparing analytical predictions with empirical behavior, both in terms of convergence regions and convergence rates.

We compare the analytical upper and lower bounds on the strong convexity radius from Theorem~\ref{thm:radii-comparison} with the empirical convergence radius of Levenberg--Marquardt at $\mathrm{SNR}=10\,\mathrm{dB}$, estimated via Monte Carlo initializations. For each perturbation distance, the algorithm is initialized 100 times, and the minimum Hessian eigenvalue is computed in parallel to estimate the empirical strong convexity radius.

Recovery is declared successful if the reconstructed parameter lies within the stability tolerance predicted by Corollary~\ref{cor:vp-stability}, measured in the metric $\rho$. The empirical convergence radius is defined as the largest perturbation for which the success rate equals one.

Experiments are conducted for $(p,q)=(2,1)$ and $(5,5)$, and for $u\in \{0.5,1,2,5\}$. The results, reported in Fig.~\ref{fig:convergence_regions}, show that the predicted convexity radii accurately characterize the empirical convergence region. In particular, coherence governs this behavior: smaller $u$ (higher coherence) induces a separation between the convexity and convergence radii, while lower coherence leads to close alignment and improved optimization performance. The analytical interval remains tight across all configurations and is tightest for $(p,q)=(2,1)$ and $u=0.5$.

\subsection{Empirical Convergence of Nonconvex Solvers}

We now examine how the local geometry of problems~\eqref{eq:lstq} and~\eqref{eq:projected-lstq}, induced by coherence, translates into algorithmic dynamics. We compare the empirical convergence behavior of several nonconvex solvers—Levenberg--Marquardt (blue triangles), Gauss--Newton (orange circles), and gradient descent (green squares)—by tracking the gradient norm as a function of iteration count for both least squares and variable projection.

Experiments are conducted for $(p,q)=(2,1)$ and $(5,5)$, and for $u\in \{0.5,5\}$, as shown in Fig.~\ref{fig:optim-behavior}. Faster kernel tail decay ($u=5$), which induces lower coherence, leads to significantly accelerated convergence across all methods, consistent with the predicted improvement in conditioning. In contrast, the regime $(p,q) =(2,1)$ with $u=0.5$, where convexity and convergence radii are misaligned, exhibits markedly slower convergence, highlighting the impact of coherence on both the size of the convergence region and the efficiency of local optimization.

\begin{figure}[t]
    \centering
    \includegraphics[width=\linewidth]{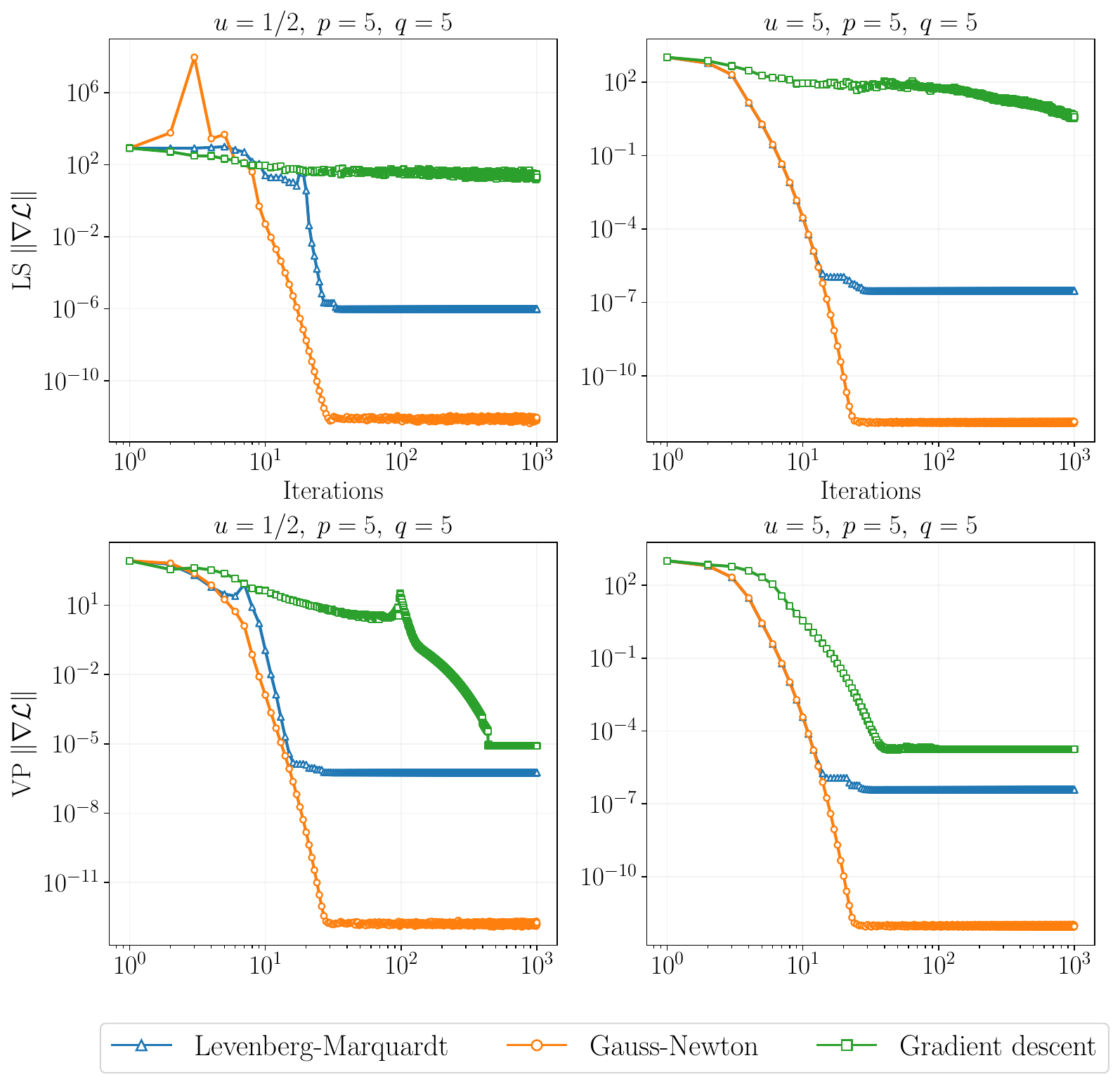}
    \caption{Empirical convergence behavior of Levenberg--Marquardt (blue triangles), Gauss--Newton (orange circles), and gradient descent (green squares) for least squares and variable projection for $(p,q)=(5,5)$ and for $u=0.5$ and $u=5$.}
    \label{fig:optim-behavior}
\end{figure}

\section{Conclusion, Perspectives}\label{sec:conclusion}

Separable signal models encompass a wide range of applications in signal processing. We have shown, using classical perturbation results, that the size of the strong basin of attraction is characterized by the Lipschitz constants of the forward map and its Fréchet derivatives. In the PSF unmixing case, we have shown that these constants depend on separation and that coherence directly determines the optimization geometry. The found constants are written as global operator-norm bounds over the feasible set for the nonlinear parameters; this, by nature, yields comparatively wide bounds in many applications. As such, future research may focus on refining the bounds by both using better specialized perturbation bounds and by, instead of expressing the region of favorable geometry in terms of classical strong convexity, harnessing generalized, descent direction-adapted notions of convexity. Finally, sparsity-informed formulations of manifold unmixing may be explored, as done for example in~\cite{michelena_gretsi_2025}. Necessary and sufficient conditions for convergence, in terms of interpretable quantities, constitute an attractive result.

\begin{figure*}[t]
    \centering
    \includegraphics[width=\textwidth]{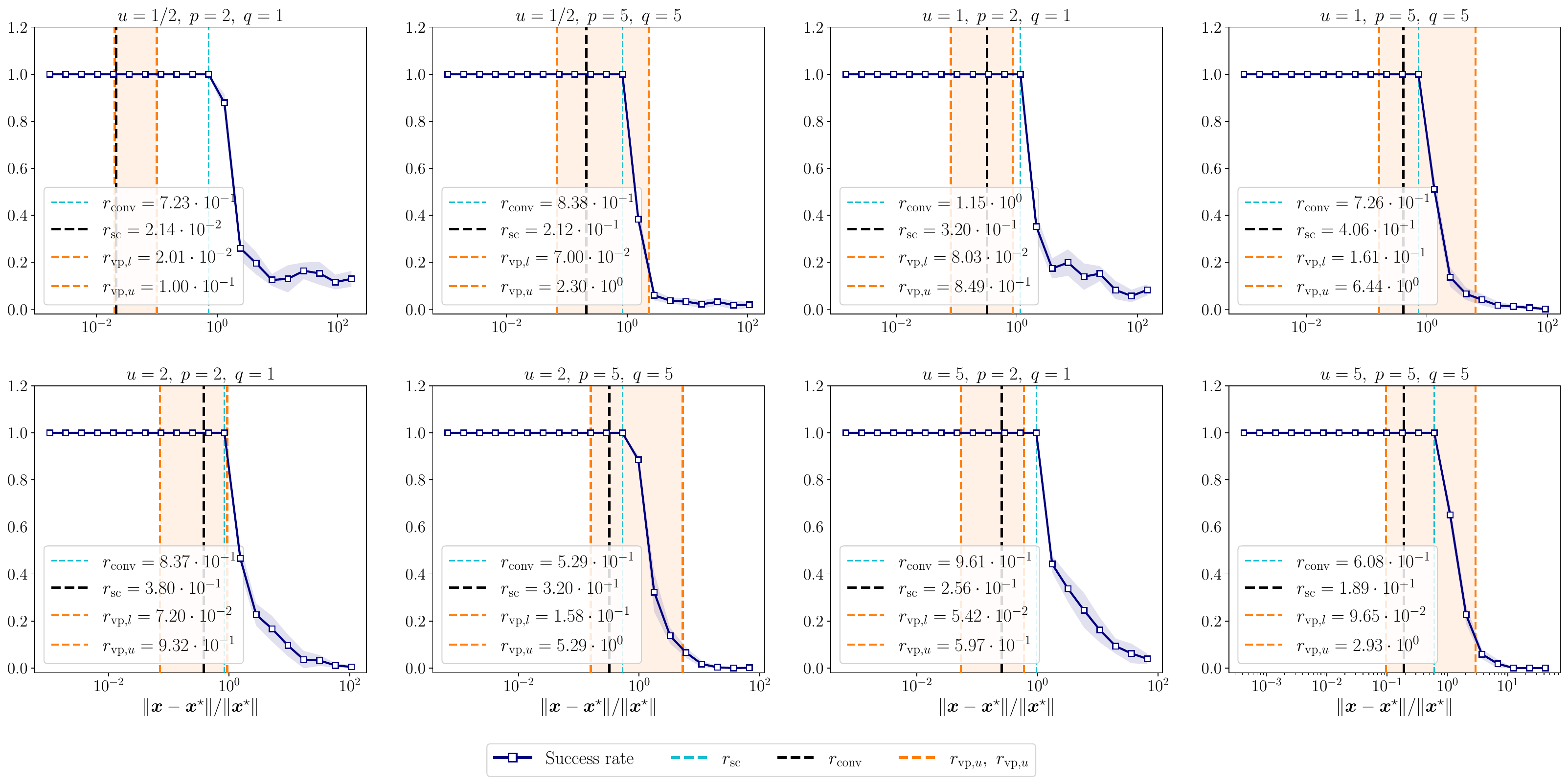}
    \caption{Empirical convergence radius and analytical strong convexity bounds for variable projection at $\mathrm{SNR}=10\,\mathrm{dB}$. Blue squares show the empirical recovery success rate, with shaded regions indicating one standard deviation. The dashed black vertical line marks the empirical strong convexity radius, and the dashed cyan vertical line the empirical convergence radius. Dashed orange vertical lines and shaded region in-between denote the lower and upper analytical convergence radius bounds from Theorem~\ref{thm:radii-comparison}. Results are shown for $(p,q)=(2,1)$ and $(5,5)$ and for $u\in\{0.5,1,2,5\}$.}
    \label{fig:convergence_regions}
\end{figure*}

\appendices

\section{Perturbation Bounds}

\subsection{Preliminaries}

Denote the Fréchet derivative of the residual $\Dd \br$. For any pair $(\bx, \by) = \btheta$, the derivative $\Dd \br(\bx, \by)$ is a linear map from $\Omega \times \R^{d}$ to $\R^N$. For a direction in $\bx$ $\bu_1$ and a direction in $\by$ $\bu_1$ it is given by
\begin{equation*}
    \Dd \br(\btheta) [\bu_1, \bu_2] = (\Dd \A(\bx)[\bu_1])\by + \A(\bx)\bu_2. 
\end{equation*}

Similarly, for any $\btheta$, the second Derivative $\Dd^2 \br(\alpha)$ is a bilinear form from $(\Omega \times \R^{d}) \times (\Omega \times \R^{d})$ to $\R^N$. For total directions $\bu = (\bu_1, \bu_2)$ and $\bv = (\bv_1, \bv_2)$ it is given by
\begin{multline}
    \Dd^2 \br(\btheta)[(\bu_1,\bu_2),(\bv_1,\bv_2)]
    = (\Dd^2 \A(\bx)[\bu_1,\bv_1])\by \\
     + \Dd \A(\bx)[\bu_1]\bv_2
      + \Dd \A(\bx)[\bv_1]\bu_2.
\end{multline}

Finally, the fundamental theorem of calculus gives
\begin{equation*}
    \Dd^k \A(\bx^\star) - \Dd^k \A(\bx) = \int_0^1 \Dd^{k+1} \A(\bx + t\Delta \bx)[\Delta \bx] \diff t,
\end{equation*}
with $\Delta \bx \coloneq \bx^\star - \bx$. So that, for all $k \in \{1, \dots, 3\}$ we obtain the bounds
\begin{align*}
    \|\Dd^k \A(\bx^\star)) - \Dd^k \A(\bx)\| &\leq \sup_{t \in [0,1]}\|\Dd^{k+1}\A(\bx + t\Delta \bx)\| \|\Delta \bx\| \\
    &\leq \sup_{\bx \in \Omega}\|\Dd^{k+1}\A(\bx)\| \|\Delta \bx\|,
\end{align*}
by convexity of $\Omega$.

\subsection{Jacobian Perturbation}\label{app:jacobian-perturbation}

For any $\btheta  \in \Omega \times \R^{d}$, the Jacobian is given by $\Dd \br$ represented as a matrix in the canonical basis. Therefore, for any total direction $\bu = (\bu_1, \bu_2)$ we have 
\begin{align*}
\|\bm{J}(\btheta) - \bm{J}(\btheta^\star)\| &= \sup_{\|\bu\|=1} \| (\Dd\br(\btheta) - \Dd \br(\btheta^\star))[\bu]\| \\
&\hspace{-20pt}\leq \sup_{\|\bu\|=1} \| (\Dd \A(\bx)[\bu_1])\by - (\Dd \A(\bx^\star)[\bu_1])\by^\star \| \\
&+ \sup_{\|\bu\|=1} \| (\A(\bx) - \A(\bx^\star))\bu_2 \|.
\end{align*}
Moreover,
\begin{align*}
(\Dd \A(\bx)[\bu_1])\by - (\Dd \A(\bx^\star)&[\bu_1])\by^\star
= (\Dd \A(\bx)[\bu_1])(\by-\by^\star) \\
&+ ((\Dd \A(\bx) - \Dd \A(\bx^\star))[\bu_1])\by^\star.
\end{align*}
then 
\begin{align*}
    \|\bm{J}(\btheta) - \bm{J}(\btheta^\star)\| \leq (\sigma_2 \|\by^\star\| + \sigma_1)\|\bx - \bx^\star\| + \sigma_1\|\by - \by^\star\|.
\end{align*}

\subsection{Residual Hessian Perturbation}\label{app:residual-perturbation}

For any total direction $\bu = (\bu_1, \bu_2)$, the perturbation of the residual part of the Hessian satisfies, 
\begin{multline*}
    \hspace{-6pt}\langle (\H_r(\btheta^\star) - \H_r(\btheta))\bu, \bu\rangle 
    = \langle \Dd^2\br(\btheta^\star)[\bu, \bu], \br(\btheta^\star) - \br(\btheta)\rangle \\ + \langle (\Dd^2\br(\btheta^\star) - \Dd^2\br(\btheta))[\bu, \bu],\br(\btheta)\rangle.
\end{multline*}
The first term satisfies
\begin{align*}
    &\langle \Dd^2\br(\btheta^\star)[\bu, \bu], \br(\btheta^\star) - \br(\btheta)\rangle \\
    &\leq (\|\Dd^2\A(\bx^\star)\|\|\by^\star\| + 2\|\Dd \A(\bx^\star)\|) \\
    &\hspace{10pt}\cdot (\sup_{\bx \in \Omega}\|\A(\bx)\|\|\by - \by^\star\| + \sup_{\bx \in \Omega}\|\Dd\A(\bx)\|\|\by^\star\|\|\bx - \bx^\star\|)
\end{align*}
For any $\btheta \in \Omega \times \R^{d}$ the residual satisfies
\begin{multline*}
    \|\br(\btheta)\| \leq \|\bw\| + \sup_{\bx \in \Omega}\|\A(\bx)\|\|\by - \by^\star\| 
    \\+ \sup_{\bx \in \Omega}\|\Dd\A(\bx)\|\|\by^\star\|\|\bx - \bx^\star\|.
\end{multline*}
Finally, we have the bound
\begin{align*}
    \MoveEqLeft[0]\|(\Dd^2\br(\btheta^\star) - \Dd^2\br(\btheta))[\bu, \bu]\| & \\
    &\qquad \leq \sup_{\bx \in \Omega}\|\Dd^3\A(\bx)\|\|\by^\star\|\|\bx - \bx^\star\| \\
    & \qquad + \sup_{\bx \in \Omega}\|\Dd^2\A(\bx)\|(2\|\bx - \bx^\star\| +\|\by - \by^\star\|).
\end{align*}
Combining all the previous bounds yields the result.

\section{Radius of the Basin of Variable Projection }\label{app:vp-radius}

Combine lemma~\ref{thm:projected-envelope} with inequality~\eqref{eq:rho-lift-bound} to yield a quadratic, decreasing, and concave polynomial on the distance $\eps =  \|\bx - \bx^\star\|$ that reads
This yields a quadratic lower bound of the form
\begin{equation*}
    q(\eps) = \lambda(\btheta^\star, \bw)  - Kc_{1, \mathrm{vp}}(\btheta^\star, \bw) \eps - Kc_{2, \mathrm{vp}}(\btheta^\star, \bw) \eps^2,
\end{equation*}
where
\begin{align*}
    c_{1, \mathrm{vp}}(\btheta^\star, \bw) &\coloneq c_\mathrm{vp}(c_1(\btheta^\star) + 2c_2(\btheta^\star))(1 + \sigma_{-}^{-1})\|\bw\|, \\
    c_{2, \mathrm{vp}}(\btheta^\star, \bw) &\coloneq c_2(\btheta^\star) c_\mathrm{vp}^2, \\
    \lambda(\btheta^\star, \bw) &\coloneq \lambda_{\min}(\H_\mathrm{vp}(\bx^\star)) - c_1(\btheta^\star)(1 + \sigma_{-}^{-1})\|\bw\| \\
    &\hspace{60pt} - (1 + \sigma_{-}^{-1})^2\|\bw\|^2.
\end{align*}
The first positive root of $q$ gives the analytical radius of strong convexity for the variable projection estimator
\begin{equation*}
    r_\mathrm{vp} = \tfrac{Kc_{1, \mathrm{vp}}(\btheta^\star, \bw) - \sqrt{K^2c_{1, \mathrm{vp}}(\btheta^\star, \bw)^2 + 4Kc_{2, \mathrm{vp}}(\btheta^\star, \bw)\lambda(\btheta^\star, \bw)}}{2Kc_{2, \mathrm{vp}}(\btheta^\star, \bw)}.
\end{equation*}
In the noiseless case ($\bw = 0$) and $K \approx 1$ limit the radius reads
\begin{equation*}
    \eps_\mathrm{vp} = \tfrac{\sqrt{c_{1}(\btheta^\star)^2 + 4c_{2}(\btheta^\star)k_\mathrm{vp}\lambda_{\min}(\H(\btheta^\star))} - c_{1}(\btheta^\star)}{2c_{2}(\btheta^\star)c_\mathrm{vp}}.
\end{equation*}
We compare the numerators of $\eps_\mathrm{vp}$ and $\eps_\mathrm{LS}$, giving (we omit arguments for notational simplicity)
\begin{align*}
    \frac{2c_2r_\mathrm{vp}c_\mathrm{cv}}{2r_\mathrm{ls}c_2} &= \tfrac{\sqrt{c_{1}^2 + 4c_{2}k_\mathrm{vp}\lambda_{\min}(\H)} - c_{1}}{\sqrt{c_{1}^2 + 4c_{2}\lambda_{\min}(\H)} - c_{1}} \\
    &= k_\mathrm{vp} \tfrac{\sqrt{c_{1}^2 + 4c_{2}\lambda_{\min}(\H)} + c_{1}}{\sqrt{c_{1}^2 + 4c_{2}k_\mathrm{vp}\lambda_{\min}(\H)} + c_{1}} \leq k_\mathrm{vp}.
\end{align*}
Furthermore, 
\begin{equation*}
   \frac{\sqrt{c_{1}^2 + 4c_{2}\lambda_{\min}(\H)} + c_{1}}{\sqrt{c_{1}^2 + 4c_{2}k_\mathrm{vp}\lambda_{\min}(\H)} + c_{1}} \geq \frac{1}{\sqrt{k_\mathrm{vp}}},
\end{equation*}
so that the numerators satisfy
\begin{equation*}
    \sqrt{k_\mathrm{vp}}r_\mathrm{ls} c_2 \eps_\mathrm{vp} \leq c_2r_\mathrm{vp}c_\mathrm{vp} \leq k_\mathrm{vp} r_\mathrm{ls} c_2.
\end{equation*}
Finally, the above inequality produces a radius in the topology induced by $(\bx, \bx') \mapsto \rho(\btheta(\bx), \btheta(\bx'))$. To yield a radius in the Euclidean topology, one must divide by $c_\mathrm{vp}$. Yielding the result.

\IEEEtriggeratref{4}
\renewcommand*{\bibfont}{\footnotesize}
\printbibliography

@article{SILVA2016178,
  title = {Kantorovich’s Theorem on Newton’s Method for Solving Generalized Equations under the Majorant Condition},
  author = {Silva, Gilson N.},
  date = {2016},
  journaltitle = {Applied Mathematics and Computation},
  shortjournal = {Appl. Math. Comput.},
  volume = {286},
  pages = {178--188},
  issn = {0096-3003},
  doi = {10.1016/j.amc.2016.04.015},
  url = {https://www.sciencedirect.com/science/article/pii/S009630031630265X},
  langid = {english}
}

@article{poon2023GeometryOfftheGrida,
  title = {The {{Geometry}} of {{Off-the-Grid Compressed Sensing}}},
  author = {Poon, Clarice and Keriven, Nicolas and Peyré, Gabriel},
  date = {2023-02-01},
  journaltitle = {Foundations of Computational Mathematics},
  shortjournal = {Found. Comput. Math.},
  volume = {23},
  number = {1},
  pages = {241--327},
  issn = {1615-3383},
  doi = {10.1007/s10208-021-09545-5},
  url = {https://doi.org/10.1007/s10208-021-09545-5},
  langid = {english}
}

@article{li2016identifiability,
  title={Identifiability in blind deconvolution with subspace or sparsity constraints},
  author={Li, Yanjun and Lee, Kiryung and Bresler, Yoram},
  journal={IEEE Transactions on information Theory},
  volume={62},
  number={7},
  pages={4266--4275},
  year={2016},
  publisher={IEEE}
}

@article{li2019multichannel,
  title={Multichannel sparse blind deconvolution on the sphere},
  author={Li, Yanjun and Bresler, Yoram},
  journal={IEEE Transactions on Information Theory},
  volume={65},
  number={11},
  pages={7415--7436},
  year={2019},
  publisher={IEEE}
}

@article{hockmann2023weak,
  title={Weak Sparse Superresolution is Well-Conditioned},
  author={Hockmann, Mathias and Kunis, Stefan},
  journal={SIAM Journal on Imaging Sciences},
  volume={16},
  number={1},
  pages={SC1--SC13},
  year={2023},
  publisher={SIAM}
}

@article{ferreira2020stable,
  title={On the stable resolution limit of total variation regularization for spike deconvolution},
  author={Ferreira Da Costa, Maxime  and Chi, Yuejie},
  journal={IEEE Transactions on Information Theory},
  volume={66},
  number={11},
  pages={7237--7252},
  year={2020},
  publisher={IEEE}
}

@article{traonmilin2024strong,
  title={On strong basins of attractions for non-convex sparse spike estimation: upper and lower bounds},
  author={Traonmilin, Yann and Aujol, Jean-Fran{\c{c}}ois and B{\'e}nard, Pierre-Jean and Leclaire, Arthur},
  journal={Journal of Mathematical Imaging and Vision},
  volume={66},
  number={1},
  pages={57--74},
  year={2024},
  publisher={Springer}
}

@article{traonmilin2020basins,
  title={The basins of attraction of the global minimizers of the non-convex sparse spike estimation problem},
  author={Traonmilin, Yann and Aujol, Jean-Fran{\c{c}}ois},
  journal={Inverse Problems},
  volume={36},
  number={4},
  pages={045003},
  year={2020},
  publisher={IOP Publishing}
}

@article{chi2020harnessing,
  title={Harnessing sparsity over the continuum: Atomic norm minimization for superresolution},
  author={Chi, Yuejie and Ferreira Da Costa, Maxime},
  journal={IEEE Signal Processing Magazine},
  volume={37},
  number={2},
  pages={39--57},
  year={2020},
  publisher={IEEE}
}

@article{knudson2014inferring,
  title={Inferring sparse representations of continuous signals with continuous orthogonal matching pursuit},
  author={Knudson, Karin C and Yates, Jacob and Huk, Alexander and Pillow, Jonathan W},
  journal={Advances in neural information processing systems},
  volume={27},
  year={2014}
}

@article{huang2008three,
  title={Three-dimensional super-resolution imaging by stochastic optical reconstruction microscopy},
  author={Huang, Bo and Wang, Wenqin and Bates, Mark and Zhuang, Xiaowei},
  journal={Science},
  volume={319},
  number={5864},
  pages={810--813},
  year={2008},
  publisher={American Association for the Advancement of Science}
}

@article{variableProjection,
	ISSN = {00361429},
	URL = {http://www.jstor.org/stable/2156365},
	author = {G. H. Golub and V. Pereyra},
	journal = {SIAM Journal on Numerical Analysis},
	number = {2},
	pages = {413--432},
	publisher = {Society for Industrial and Applied Mathematics},
	title = {The Differentiation of Pseudo-Inverses and Nonlinear Least Squares Problems Whose Variables Separate},
	urldate = {2024-10-21},
	volume = {10},
	year = {1973}
}

@book{Nocedal2006,
  author    = {Jorge Nocedal and Stephen J. Wright},
  title     = {Numerical Optimization},
  edition   = {2nd},
  year      = {2006},
  publisher = {Springer},
  address   = {New York, NY, USA},
  isbn      = {978-0-387-30303-1}
}

@inproceedings{moulines_maximum_1997,
	title = {Maximum likelihood for blind separation and deconvolution of noisy signals using mixture models},
	volume = {5},
	url = {https://ieeexplore.ieee.org/abstract/document/604649},
	doi = {10.1109/ICASSP.1997.604649},
	eventtitle = {1997 {IEEE} International Conference on Acoustics, Speech, and Signal Processing},
	pages = {3617--3620 vol.5},
	booktitle = {1997 {IEEE} International Conference on Acoustics, Speech, and Signal Processing},
	author = {Moulines, E. and Cardoso, J.-F. and Gassiat, E.},
	urldate = {2025-02-01},
	date = {1997-04},
	note = {{ISSN}: 1520-6149},
}

@ARTICLE{costa_local_2023,
  author={Ferreira Da Costa, Maxime and Chi, Yuejie},
  journal={IEEE Journal on Selected Areas in Information Theory}, 
  title={Local Geometry of Nonconvex Spike Deconvolution From Low-Pass Measurements}, 
  year={2023},
  volume={4},
  number={},
  pages={1-15},
  doi={10.1109/JSAIT.2023.3262689}}

@book{golub2013MatrixComputations,
  title = {Matrix Computations},
  author = {Golub, Gene H. and Van Loan, Charles F.},
  date = {2013},
  series = {Johns {{Hopkins}} Studies in the Mathematical Sciences},
  edition = {Fourth edition},
  publisher = {The Johns Hopkins University Press},
  location = {Baltimore},
  isbn = {978-1-4214-0794-4},
  langid = {english},
  pagetotal = {756}
}

@article{Chretien2020,
  author    = {S{\'e}bastien Chr{\'e}tien and Himanshu Tyagi},
  title     = {Multi-kernel Unmixing and Super-Resolution Using the Modified Matrix Pencil Method},
  journal   = {Journal of Fourier Analysis and Applications},
  volume    = {26},
  number    = {1},
  pages     = {18},
  year      = {2020},
  doi       = {10.1007/s00041-020-09725-x},
  url       = {https://doi.org/10.1007/s00041-020-09725-x}
}

@article{projected-hessian,
  author    = {Axel Ruhe and Per {\AA}ke Wedin},
  title     = {Algorithms for Separable Nonlinear Least Squares Problems},
  journal   = {SIAM Review},
  volume    = {22},
  number    = {3},
  pages     = {318--337},
  year      = {1980},
  publisher = {Society for Industrial and Applied Mathematics},
  url       = {http://www.jstor.org/stable/2030320},
  note      = {Accessed: 2025-06-25}
}

@article{varpro-meta-review,
doi = {10.1088/0266-5611/19/2/201},
url = {https://dx.doi.org/10.1088/0266-5611/19/2/201},
year = {2003},
month = {2},
publisher = {},
volume = {19},
number = {2},
pages = {R1},
author = {Gene Golub and Victor Pereyra},
title = {Separable nonlinear least squares: the variable projection method and its
applications},
journal = {Inverse Problems}
}

@misc{ferreira_local_2010,
	title = {Local convergence analysis of {Gauss}-{Newton}'s method under majorant condition},
	url = {http://arxiv.org/abs/1003.5004},
	doi = {10.48550/arXiv.1003.5004},
	urldate = {2025-12-18},
	publisher = {arXiv},
	author = {Ferreira, O. P. and Goncalves, M. L. N. and Oliveira, P. R.},
	month = mar,
	year = {2010},
	note = {arXiv:1003.5004 [math]},
}

@article{ferreira_kantorovichs_2009,
	title = {Kantorovich’s majorants principle for {Newton}’s method},
	volume = {42},
	copyright = {http://www.springer.com/tdm},
	issn = {0926-6003, 1573-2894},
	url = {http://link.springer.com/10.1007/s10589-007-9082-4},
	doi = {10.1007/s10589-007-9082-4},
	language = {en},
	number = {2},
	urldate = {2025-12-18},
	journal = {Computational Optimization and Applications},
	author = {Ferreira, O. P. and Svaiter, B. F.},
	month = mar,
	year = {2009},
	pages = {213--229},
}

@INPROCEEDINGS{michelena_conv_2025,
  author={Michelena, Santos and Ferreira Da Costa, Maxime  and Picheral, José},
  booktitle={2025 IEEE Statistical Signal Processing Workshop (SSP)}, 
  title={Convergence Guarantees for Unmixing {PSFs} over a Manifold with Non-Convex Optimization}, 
  year={2025},
  volume={},
  number={},
  pages={161-165},
  doi={10.1109/SSP64130.2025.11073360}}

@INPROCEEDINGS{michelena-icassp,
  author={Michelena, Santos and Ferreira Da Costa, Maxime  and Picheral, José},
  booktitle={ICASSP 2026 - 2026 IEEE International Conference on Acoustics, Speech and Signal Processing (ICASSP)}, 
  title={Strong Basin of Attraction for Unmixing Kernels with the Variable Projection Method}, 
  year={2026},
  volume={},
  number={},
  pages={726-730},
  doi={10.1109/ICASSP55912.2026.11461999}}

@article{Smith1992InterlacingSchur,
  author    = {Ronald L. Smith},
  title     = {Some interlacing properties of the Schur complement of a Hermitian matrix},
  journal   = {Linear Algebra and its Applications},
  volume    = {177},
  pages     = {137--144},
  year      = {1992},
  doi       = {10.1016/0024-3795(92)90321-Z},
  url       = {https://doi.org/10.1016/0024-3795(92)90321-Z}
}

@Article{oleary-varpro,
journal={Computational Optimization and Applications},
author={Dianne O'Leary and Bert Rust},
title={Variable projection for nonlinear least squares problems},
year={2013},
month={4},
pages={579-593},
volume={54},
number={3},
doi={10.1007/s10589-012-9492-9},
url={https://ideas.repec.org/a/spr/coopap/v54y2013i3p579-593.html},
}

@article{kaufman1975variable,
  author    = {Linda Kaufman},
  title     = {A variable projection method for solving separable nonlinear least squares problems},
  journal   = {BIT Numerical Mathematics},
  volume    = {15},
  number    = {1},
  pages     = {49--57},
  year      = {1975},
  doi       = {10.1007/BF01932995},
  url       = {https://doi.org/10.1007/BF01932995}
}

@article{poggialini_catching_2023,
	title = {Catching up on calibration-free {LIBS}},
	volume = {8},
	journal = {Journal of Analytical Spectroscopy},
	author = {Poggialini, F. and {colleagues}},
	year = {2023},
	pages = {1234--1245},
}

@book{bauschke2017convex,
  author    = {Heinz H. Bauschke and Patrick L. Combettes},
  title     = {Convex Analysis and Monotone Operator Theory in Hilbert Spaces},
  publisher = {Springer},
  series    = {CMS Books in Mathematics},
  edition   = {2nd},
  year      = {2017},
  isbn      = {9781493911912},
  url       = {https://www.springer.com/gp/book/9781493911912},
}

@misc{salzer2026varpro,
      title={Variable Projection Methods for Solving Regularized Separable Inverse Problems with Applications to Semi-Blind Image Deblurring}, 
      author={Delfina B. Comerso Salzer and Malena I. Español and Gabriela Jeronimo},
      year={2026},
      eprint={2601.05224},
      archivePrefix={arXiv},
      primaryClass={math.NA},
      url={https://arxiv.org/abs/2601.05224}, 
}

@article{Mullen2007,
  author    = {Mullen, K. M. and Vengris, M. and van Stokkum, I. H. M.},
  title     = {Algorithms for separable nonlinear least squares with application to modelling time-resolved spectra},
  journal   = {Journal of Global Optimization},
  volume    = {38},
  number    = {},
  pages     = {201--213},
  year      = {2007},
  doi       = {10.1007/s10898-006-9071-7}
}

@inproceedings{wang2024varproPDE,
  author       = {Yue Wang and Yifan Chen and Xiangxiong Zhang},
  title        = {Variable Projection for Computational PDEs with Artificial Neural Networks},
  booktitle    = {Proceedings of the 2024 Symposium on Scientific Computing and Machine Learning (SCML)},
  year         = {2024},
  url          = {https://scml.jp/2024/paper/10/CameraReady/scml2024.pdf}
}

@ARTICLE{kovacs-var-pro,
  author={Kovács, Péter and Fridli, Sándor and Schipp, Ferenc},
  journal={IEEE Transactions on Signal Processing}, 
  title={Generalized Rational Variable Projection With Application in ECG Compression}, 
  year={2020},
  volume={68},
  number={},
  pages={478-492},
  doi={10.1109/TSP.2019.2961234}}

@article{gauss-newton-kantorovich,
author = {H\"{a}ussler, W. M.},
title = {A Kantorovich-type convergence analysis for the Gauss-Newton-Method},
year = {1986},
issue_date = {January   1986},
publisher = {Springer-Verlag},
address = {Berlin, Heidelberg},
volume = {48},
number = {1},
issn = {0029-599X},
url = {https://doi.org/10.1007/BF01389446},
doi = {10.1007/BF01389446},
journal = {Numer. Math.},
month = jan,
pages = {119–125},
numpages = {7}
}

@book{KantorovichAkilov1982,
  author    = {L. V. Kantorovich and G. P. Akilov},
  title     = {Functional Analysis},
  edition   = {2},
  publisher = {Pergamon Press},
  address   = {Oxford},
  year      = {1982},
  isbn      = {0-08-023036-9},
}

@ARTICLE{chamon,
  author={Chamon, Luiz F. O. and Eldar, Yonina C. and Ribeiro, Alejandro},
  journal={IEEE Transactions on Signal Processing}, 
  title={Functional Nonlinear Sparse Models}, 
  year={2020},
  volume={68},
  number={},
  pages={2449-2463},
  doi={10.1109/TSP.2020.2982834}}

@ARTICLE{nonlin-compress,
  author={Blumensath, Thomas},
  journal={IEEE Transactions on Information Theory}, 
  title={Compressed Sensing With Nonlinear Observations and Related Nonlinear Optimization Problems}, 
  year={2013},
  volume={59},
  number={6},
  pages={3466-3474},
  doi={10.1109/TIT.2013.2245716}}

@article{Beinert_2019,
doi = {10.1088/1361-6420/aaea43},
url = {https://doi.org/10.1088/1361-6420/aaea43},
year = {2018},
month = {11},
publisher = {IOP Publishing},
volume = {35},
number = {1},
pages = {015002},
author = {Beinert, Robert and Bredies, Kristian},
title = {Non-convex regularization of bilinear and quadratic inverse problems by tensorial lifting},
journal = {Inverse Problems}
}

@inproceedings{michelena_gretsi_2025,
  author    = {Santos Michelena and José Picheral and Ferreira Da Costa, Maxime},
  title     = {Reconstruction de Réponses Impulsionnelles sur une Variété pour la Spectroscopie de Plasma Induit par Laser},
  booktitle = {Proceedings of the XXXe Colloque Francophone de Traitement du Signal et des Images (GRETSI 2025)},
  address   = {Strasbourg, France},
  month     = aug,
  year      = {2025},
  organization = {GRETSI}
}

\end{document}